\newcolumntype{Y}{>{\centering\arraybackslash}X}
\newcommand{\N}{\mathbb{N}}
\newcommand{\agents}{\ensuremath{\mathcal{N}}}
\newcommand{\RF}{\ensuremath{R}}     %
\newcommand{\IH}{\ensuremath{I}}     %
\newcommand{\ia}{\ensuremath{\iota}} %
\newcommand{\pl}{\ensuremath{\pi}}   %
\newcommand{\probName}[1]{\textsc{#1}\xspace}
\newcommand{\ARH}{\probName{Anonymous Refugees Housing}}
\newcommand{\ARHshort}{\textsf{ARH}\xspace}
\newcommand{\HRH}{\probName{Hedonic Refugees Housing}}
\newcommand{\HRHshort}{\textsf{HRH}\xspace}
\newcommand{\DRH}{\probName{Diversity Refugees Housing}}
\newcommand{\DRHshort}{\textsf{DRH}\xspace}
\newcommand{\NP}{\textsf{NP}\xspace}
\newcommand{\paraNPh}{\textsf{para-\NP}-hard\xspace}
\newcommand{\NPh}{\NP-hard\xspace}
\newcommand{\NPhness}{\NP-hardness\xspace}
\newcommand{\NPc}{\NP-complete\xspace}
\newcommand{\W}[1][1]{\textsf{W[#1]}\xspace}
\newcommand{\Wh}[1][1]{\W[#1]-hard\xspace}
\newcommand{\Wc}[1][1]{\W[#1]-complete\xspace}
\newcommand{\XP}{\textsf{XP}\xspace}
\newcommand{\FPT}{\textsf{FPT}\xspace}
\newcommand{\oh}[1]{{o(#1)}}
\newcommand{\Oh}[1]{{\mathcal{O}(#1)}}
\newcommand{\yes}{\emph{yes}\xspace}
\newcommand{\yesI}{\yes-instance\xspace}
\newcommand{\no}{\emph{no}\xspace}
\newcommand{\noI}{\no-instance\xspace}
\newtheorem{remark}{Remark}
\newcommand{\cmark}{\ding{51}}%
\newcommand{\xmark}{\ding{55}}%
\begin{document}

\title{Host Community Respecting Refugee Housing}
\thanks{An extended abstract of this work has been published in the Proceedings of the 22nd International Conference on Autonomous Agents and Multiagent Systems (AAMAS~'23)~\cite{KnopS2023a}.}

\author{Dušan Knop}
\email{dusan.knop@fit.cvut.cz}
\orcid{0000-0003-2588-5709}
\author{Šimon Schierreich}
\email{simon.schierreich@fit.cvut.cz}
\orcid{0000-0001-8901-1942}
\affiliation{%
	\department{Department of Theoretical Computer Science, Faculty of Information Technology}
	\institution{Czech Technical University in Prague}
	\streetaddress{Thákurova 9}
	\city{Prague}
	\country{Czechia}
	\postcode{160 00}%
}

\begin{abstract}
  We propose a novel model for refugee housing respecting the preferences of the accepting community and refugees themselves. In particular, we are given a topology representing the local community, a set of inhabitants occupying some vertices of the topology, and a set of refugees that should be housed on the empty vertices of the graph. Both the inhabitants and the refugees have preferences over the structure of their neighborhood.
  
  We are specifically interested in the problem of finding housing such that the preferences of every individual are met; using {game-theoretical} words, we are looking for housing that is stable with respect to some well-defined notion of stability. We investigate conditions under which the existence of equilibria is guaranteed and study the computational complexity of finding such a stable outcome. As the problem is \NPh even in very simple settings, we employ the parameterized complexity framework to give a finer-grained view of the problem's complexity with respect to natural parameters and structural restrictions of the given topology.
\end{abstract}

\begin{CCSXML}
	<ccs2012>
	<concept>
	<concept_id>10003752.10003809.10010052</concept_id>
	<concept_desc>Theory of computation~Parameterized complexity and exact algorithms</concept_desc>
	<concept_significance>500</concept_significance>
	</concept>
	<concept>
	<concept_id>10003752.10003809.10003635</concept_id>
	<concept_desc>Theory of computation~Graph algorithms analysis</concept_desc>
	<concept_significance>300</concept_significance>
	</concept>
	<concept>
	<concept_id>10002950.10003624.10003633.10010917</concept_id>
	<concept_desc>Mathematics of computing~Graph algorithms</concept_desc>
	<concept_significance>300</concept_significance>
	</concept>
	<concept>
	<concept_id>10003752.10010070.10010099</concept_id>
	<concept_desc>Theory of computation~Algorithmic game theory and mechanism design</concept_desc>
	<concept_significance>500</concept_significance>
	</concept>
	</ccs2012>
\end{CCSXML}

\ccsdesc[500]{Theory of computation~Algorithmic game theory and mechanism design}
\ccsdesc[500]{Theory of computation~Parameterized complexity and exact algorithms}
\ccsdesc[300]{Theory of computation~Graph algorithms analysis}
\ccsdesc[300]{Mathematics of computing~Graph algorithms}

\keywords{Refugee Housing, Matching, Social Choice, Computational Complexity, Stability Concepts, Fixed-parameter tractability, Refugee Integration}

\received{20 February 2007}
\received[revised]{12 March 2009}
\received[accepted]{5 June 2009}

\maketitle

\section{Introduction} 

According to the 2023 report of the United Nations High Commissioner for Refugees (UNHCR), there were 108.4 million forcibly displaced persons at the end of 2022~\cite{UNHCR2022}. It is the highest number since the aftermath of World War II, and it is certain that these numbers will grow even more. %
They identified the war in Ukraine as the leading cause of the dramatic growth in the last year~\cite{UN2022}.  Russian aggression not only forced many Ukrainians to leave their homes, but even caused food insecurity and related population movement in many parts of the world, since Ukraine is among the fifth largest wheat exporters in the world~\cite{BehnassiH2022}.  

It should be mentioned that political and armed conflicts are not the only causes of forced displacement~\cite{UNHCR2022}. One of the most common reasons for fleeing is due to natural disasters. To name just a few, in August 2022, massive floods across Pakistan affected at least two-thirds of the districts and displaced at least 33 million people~\cite{Mallapaty2022a,Mallapaty2022b}. At the same time, a devastating drought in Somalia caused the internal displacement of at least 755,000 people~\cite{UNHCR2022}. Furthermore, it is expected that, due to climate change, extremes of climate will become even more common in the near future~\cite{FischerSK2021}.

Arguably, the best prevention against the phenomenon of forced displacement is not allowing it to appear at all; however, the aforementioned numbers clearly show that these efforts are not very successful. Therefore, in practice, three main solutions are assumed~\cite{JallowM2004}. Voluntary \emph{repatriation} is the most desirable but not very successful option. In many situations, repatriation is not even possible due to ongoing conflicts or a completely devastated environment. \emph{Resettlement} and \emph{integration} in the country of origin or abroad are more common. These two solutions require considerable effort from both the newcomers and the host community sides.

The very problematic part of forced displacement is the fact that~$38\%$ of all refugees\footnote{From the strict sociological point-of-view, not all forcibly displaced persons are classified as refugees. Slightly abusing the terminology, we will use the terms refugee and displaced person interchangeably.} are hosted in only five countries~\cite{UNHCR2022}. And these are only the absolute numbers. For example, in Lebanon, every one in four people is a refugee~\cite{NRC2022}. The redistribution of refugees seems to be a natural solution to this imbalance; however, not all countries are willing to accept all people. One such example is the Czech Republic, which refused to accept any Syrian refugees during the 2015 European migrant crisis, hosting the largest number of Ukrainian refugees per capita as of 2023~\cite{UNHCR2022b}.

Even with working and widely accepted redistributing policies, there is still a need to provide housing in specific cities and communities. From the good examples of such integration strategies~\cite{OECD2018,ZierschDW2023}, it follows that one of the most important characteristics is that members of the accepting community do not feel threatened by the newcomers.

Inspired by this, we propose a novel computational model for refugee housing. Our ultimate goal is to find an assignment of displaced persons into empty houses of a community such that this assignment corresponds to the preferences of the inhabitants about the structure of their neighborhoods and, at the same time, our model also takes into consideration the preferences of the refugees themselves, as refugees dissatisfied with their neighborhood have a strong intention to leave the community. More precisely, in our model, we are given a topology of the community, which is an undirected graph, a set of inhabitants together with their assignment to the vertices of the topology and preferences over the shape of their neighborhood, and a set of refugees with the same requirements on the neighborhood shape. We want to find housing for refugees in the empty vertices of the topology so that the housing satisfies a certain criterion, such as stability.

Refugee redistribution has gained the attention of mathematicians and computer scientists only very recently. The formal model for capturing refugee resettlement is a double-sided matching \cite{DelacretazKT2023,AzizCGS2018}. That is, in the input, we are given a set of locations with multidimensional constraints and refugees with multidimensional features. An example of a constraint can be the number of refugees the location can accept on the one side and the size of a family on the refugee side. The question then is whether there exists a matching between locations and refugees respecting all constraints. According to us, this formulation of the refugee resettlement problem concerns the global perspective of refugee redistributing, not the local housing problem, as we focus on in our paper. \citet{AzizCGS2018} study mostly the complexity of finding stable matching with respect to different notions of stability; it turns out that, for most of the stability notions, finding a stable matching is computationally intractable (\NPh, in fact). \citet{KuckuckRW2019} later refined the model of \citet{AzizCGS2018} in terms of hedonic games. \citet{BansakFHDHLW2018,BansakP2022,AhaniAMTT2021} explored the application of machine-learning techniques in the context of refugee redistribution.

\subsection{Our Contribution}

Partly continuing the line of research in refugee resettlement, we introduce a novel model focused on the local housing of new refugees. Previous models~\cite{AzizCGS2018,AhaniGPTT2023,AhaniAMTT2021,DelacretazKT2023} can be seen and used as a very effective model on the (inter-) national level to distribute refugees to certain locations, such as states or cities.\footnote{In fact, the American resettlement agency HIAS uses the matching software \emph{Annie\textsuperscript{TM}} \textsc{Moore} which is powered by the refugee-redistribution algorithms and ILP formulations of the problem~\cite{AhaniAMTT2021,AhaniGPTT2023}.} However, our model can be assumed as the second level of refugee redistribution; once refugees are allocated to some community, we want to house them in a way that respects the preferences of both inhabitants and refugees.

\begin{table}[bt!]
	\caption{A basic overview of our complexity results. The first column contains specific variants of the refugee housing problem (\textsf{ARH} -- anonymous preferences, \textsf{HRH} -- hedonic preferences, and \textsf{DRH} -- diversity preferences). All the other columns contain complexity classification of the combination of a parameter and a variant; here,~$\RF$ is a set of refugees,~$\IH$ is a set of inhabitants,~$\operatorname{vc}$ is the vertex-cover number, and~$\operatorname{tw}$ is the tree-width of the topology, respectively. If a cell contains \NPh for some parameter, it means that the problem is \NPh already for a constant value of this parameter.}
	\label{fig:results}
	\Crefname{theorem}{Thm.}{Thm.}
	\Crefname{corollary}{Cor.}{Cor.}
	\renewcommand{\crefpairconjunction}{,\,}
	\def\arraystretch{1.2}
	\begin{center}
		\begin{tabularx}{\textwidth}{lYYYYYY}
			\toprule
			& 
				--- & 
				$|\RF|$ & 
				$|\IH|$ & 
				$|\RF|+|\IH|$ &
				~$\operatorname{vc}$ &
				~$\operatorname{tw}$ \\
			\midrule
			\ARHshort & 
				\NP-h {\small\Cref{thm:ARH:NPh}} & 
				\mbox{\W[2]-h\,+\,\XP} {\small\Cref{thm:ARH:Wh:R,thm:ARH:XP:R}} & 
				{\W-h\,+\,\XP} {\small\Cref{thm:ARH:Wh:I,thm:ARH:XP:I}} & 
				\FPT{} {\small\Cref{thm:ARH:FPT:R+I}} & 
				{\W-h\,+\,\XP} {\small\Cref{thm:ARH:Wh:vc,thm:ARH:XP:vc}} & 
				\W-h {\small\Cref{thm:ARH:Wh:vc}}\\
			\midrule
			\HRHshort & 
				\NP-h {\small\Cref{thm:HRH:NPh}} & 
				{\W[2]-h\,+\,\XP} {\small\Cref{thm:HRH:XP:R}} & 
				{\NP-h} {\small\Cref{thm:HRH:NPh:I}} & 
				\FPT {\linebreak\small\Cref{thm:HRH:FPT:R+I}} & 
				{\NP-h} {\small\Cref{thm:HRH:NPh:vc}} & 
				\NP-h {\small\Cref{thm:HRH:NPh:vc}} \\
			\midrule
			\DRHshort & 
				\NP-h {\small\Cref{thm:DRH:NPc:types}} & 
				{\W[2]-h\,+\,\XP} {\Cref{thm:DRH:XP:R}} &
				{\NP-h} {\small\Cref{thm:DRH:HRH:reduction}}&
				{\W[1]-h\,+\,\XP} {\Cref{thm:DRH:Wh:IR,thm:DRH:XP:R}} &
				{\NP-h} {\small\Cref{thm:DRH:HRH:reduction}}&
				{\NP-h} {\small\Cref{thm:DRH:HRH:reduction}}\\
			\bottomrule
		\end{tabularx}
	\end{center}
	\renewcommand{\crefpairconjunction}{and \nobreakspace}
\end{table}

In particular, we introduce three variants of refugee housing, each targeting a certain perspective of this problem. Our simplest model, introduced in \Cref{sec:anonymous_refugees}, completely eliminates the preferences of refugees and studies only the stability of the housing with respect to the preferences of the inhabitants. We call this variant \emph{anonymous} housing. Since refugees are assumed to be indistinguishable, inhabitants have preferences over the number of refugees in their neighborhood.

As stated above, the most successful refugee integration projects have the following properties in common: they try to make both inhabitants and refugees as satisfied as possible through various activities to ensure that both groups get to know each other. We believe that our \emph{hedonic} model, where the preferences of both inhabitants and refugees are based on the identity of particular members of the other group, supports and leads to more stable and acceptable housing. This model is formally defined and studied in \Cref{sec:hedonic_setting}.

The two introduced models have some disadvantages. The first is disrespectful to the refugees' preferences, while the second is not very realistic, as it is hard to make all inhabitants familiar with all refugees and the other way around. Therefore, our last model can be seen as a compromise between these two extremes. In the \emph{diversity} setting, introduced in \Cref{sec:diversity_setting}, all agents (both inhabitants and refugees) are partitioned into~$k$ types, and their preferences are over the fractions of agents of each type in the neighborhood of each agent. Another advantage of this approach is that it nicely captures also the settings where we already have some number of integrated refugees and the newcomers want to have some of them in the neighborhood, or the case of an internally displaced person, where naturally some inhabitants and refugees share some similarities.

In all the aforementioned variants of the refugee housing problem, agents have dichotomous preferences; that is, they approve some set of alternatives and do not distinguish between them. It can be seen that if the neighborhood of some agent does not comply with his approval set, he would rather leave the local community, which is very undesirable behavior on both sides. 

For all assumed variants, we show that an equilibrium is not guaranteed to exist even in very simple instances. Thus, we study the computational complexity of finding an equilibrium or deciding that no equilibrium exists. To this end, we provide polynomial-time algorithms and complementary \NPhness results. In order to paint a more comprehensive picture of the computational tractability of the aforementioned problems, we employ a finer-grained framework of parameterized complexity to give tractable algorithms for, e.\,g., instances where the number of refugees or the number of inhabitants is small or for certain structural restrictions of the topology. Additionally, we complement many of our algorithmic results with conditional lower bounds matching the running time of these algorithms. A basic overview of our results can be found in \Cref{fig:results}.

\subsection{Related Work} 
Our model is influenced by a game-theoretic reformulation of the famous Schelling's model~\cite{Schelling1969,Schelling1971} of residential segregation introduced by \citet{AgarwalEGISV2021}. Here, we are given a simple undirected graph~$G$ and a set of selfish agents partitioned into~$k$ types. Every agent wants to maximize the fraction of agents of her own type in her neighborhood. The goal is then to assign agents to the vertices of~$G$ so that no agent can improve her utility by either jumping to an unoccupied vertex or swapping positions with another agent. Follow-up works include those that study the problem from the perspective of computational complexity and equilibrium existence guarantees~\cite{KreiselBFN2022,EchzellFLMPSSS2019,DeligkasEG2023,FriedrichLMS2023,BiloBDLMS2023,BiloBLM2022a,BiloBLM2022b,KanellopoulosKV2023}.

The second main inspiration for our model is the \probName{Hedonic Seat Arrangement} problem and its variants recently introduced by \citet{BodlaenderHJOOZ2020}. Here, the goal is to find an assignment of agents with preferences for the vertices of the underlying topology. The desired assignment should then meet specific criteria such as different forms of stability, maximizing social welfare, or being envy-free. In our model, compared to \probName{Hedonic Seat Arrangement} of \citeauthor{BodlaenderHJOOZ2020}~\cite{BodlaenderHJOOZ2020}, the inhabitants already occupy some vertices of the topology, and we have to assign refugees to the remaining (empty) vertices in a desirable way. \probName{Hedonic Seat Arrangement} is also heavily studied from the perspective of (parameterized) algorithms and complexity~\cite{CeylanCR2023,Wilczynski2023,BerriaudCW2023}.

Next, the problem of \emph{house allocation}~\cite{AbdulkadirogluS1999} or \emph{housing market}~\cite{ShapleyS1974} has been extensively studied in the area of mechanism design. Here, each agent owns a house, and the objective is to find a socially efficient outcome using reallocations of objects. Later, \mbox{\citet{YouDTLY2022}} introduced house allocation over social networks that follows the current trend in mechanism design initiated by \citet{LiHZZ2017}. There, each individual can only communicate with his neighbors. As stated before, house allocation is studied mainly from the viewpoint of mechanism design, and as such, it is far from our model.

Then, \emph{hedonic games}~\cite{DrezeG1980,BogomolnaiaJ2002,BrandtCELP2016} are a well-studied class of coalition formation games where the goal is to partition agents into coalitions and where the utility of every agent depends on the identity of other agents in his coalition. 
In \emph{anonymous games}~\cite{BanerjeeKS2001,BogomolnaiaJ2002}, the agents have preferences over the sizes of their coalition. The most recent variants of hedonic games are the so-called \emph{hedonic diversity games}~\cite{BredereckEI2019,BoehmerE2020b,Darmann2023,GanianHKSS2023} where agents are partitioned into~$k$ types and preferences are over the ratios of each type in the coalition. The main difference between our model and (all variants of) hedonic games is that in the latter model, all coalitions are pairwise disjoint; however, in our case, each agent has his own neighborhood overlapping with neighborhoods of other agents.
In closely related \emph{social distance games}~\cite{BranzeiL2011,BalliuFMO2017,BalliuFMO2019,BalliuFMO2022,GanianHKRSS2023}, there is also the topology. However, the position of agents in the topology is fixed, and the goal is to partition the agents into stable coalitions, similar to the model of hedonic games.

Finally, in the recently introduced \emph{topological distance games}~\cite{BullingerS2024,DeligkasEKS2024}, the input also consists of a topology and a set of agents. The goal is to assign the agents to the topology so that no agent wants to perform some deviation, such as swapping positions with other agents or jumping to an empty vertex. However, in topological distance games, the utility of every agent is based on the inherent utility the agent has for other agents and their distance in the assignment; in our setting, agents are interested only in their neighborhood.

\section{Preliminaries}\label{sec:preliminaries}

Let~$\N$ denote the set of positive integers. Given two positive integers~${i,j\in\N}$, with~$i \leq j$, we call the set~$[i,j] = \{i,\ldots,j\}$ an \emph{interval}, and we let~$[i] = [1,i]$ and~$[i]_0=[i]\cup\{0\}$. Let~$S$ be a set. By~$2^S$ we denote the set of all subsets of~$S$ and, given~$k\leq |S|$, we denote by~$\binom{S}{k}$ the set of all subsets of~$S$ of size~$k$.

\subsection{Graph Theory}
All graphs assumed in this work are simple and undirected. Formally, a graph~$G$ is a pair~$(V,E)$, where~$V$ is a non-empty set of \emph{vertices} and~$E\subseteq\binom{V}{2}$ is a set of \emph{edges}.
Given a vertex~$v\in V$, we denote by~$N_G(v)$ the set of its \emph{neighbors}, formally,~$N_G(v) = \{u\mid \{u,v\}\in E\}$. The size of the neighborhood of a vertex~$v$ is called its \emph{degree} and is defined as~$\deg(v) = |N_G(v)|$. The \emph{closed neighborhood} of vertex~$v$ is defined as~$N_G[v] = N_G(v) \cup \{v\}$. 
In this work, we follow the basic graph-theoretical terminology given in~\cite{Diestel2017}.

\subsection{Refugee Housing}
Let~$\RF = \{r_1,\ldots,r_m\}$ be a non-empty set of \emph{refugees} and~$\IH = \{h_1,\ldots,h_\ell\}$ be a set of \emph{inhabitants}. The set of all \emph{agents} is defined as~$\agents = \RF \cup \IH$. We set~$n=|V|$.
A \emph{topology} is a simple undirected graph~$G=(V,E)$, where~$|V| \geq |\agents|$. 
An \emph{inhabitants assignment} is an injective function~$\ia\colon \IH\to V$ that maps inhabitants to vertices of the topology. The set of vertices occupied by the inhabitants is denoted~$V_\IH$ and, given an inhabitant~$h\in \IH$, we denote the set of unoccupied vertices in his neighborhood~$U_h = N(\ia(h))\setminus V_\IH$. The set of all vertices that are not occupied by inhabitants is denoted~$V_U = V\setminus V_I$.
The goal of every variant of our problem is to find a mapping of refugees to vertices that are not occupied by inhabitants. Formally, \emph{housing} is an injective mapping~$\pl\colon \RF \to V_U$. A set of vertices occupied by refugees with respect to housing~$\pl$ is denoted~${V_\pl = \{\pl(r)\mid r\in \RF\}}$. We denote by~$\Pi_{G,\ia}$ the set of all possible housings, and we drop the subscript whenever~$G$ and~$\ia$ are clear from the context.%

\subsection{Parameterized Complexity}
We study the problem in the framework of parameterized complexity~\cite{CyganFKLMPPS2015,DowneyF1995,Niedermeier2006}. Here, we investigate the complexity of the problem not only with respect to an input size~$n$ but even assuming some additional \emph{parameter}~$k$. The goal is to find a parameter that is small, and the ``hardness'' can be confined to this parameter. The most favorable outcome is an algorithm with running time~$f(k)\cdot n^{\Oh{1}}$, where~$f$ is any computable function. We call this algorithm \emph{fixed-parameter tractable}, and the complexity class containing all problems that admit algorithms with such running time is called \FPT. 
Not all combinations of parameters yield to fixed-parameter tractable algorithms. A less favorable outcome is an algorithm running in~$n^{f(k)}$ time, where~$f$ is any computable function. Parameterized problems admitting such algorithms belong to complexity class~\XP. To exclude the existence of a fixed-parameter tractable algorithm, one can show that the parameterized problem is \Wh[t] for some~$t\geq 1$. This can be done via a \emph{parameterized reduction} from any problem known to be \Wh[t].

\begin{definition}[Parameterized reduction~\cite{CyganFKLMPPS2015}]
	Let~$P$ and~$Q$ be two parameterized problems. A \emph{parameterized reduction} of the problem~$P$ to the problem~$Q$ is an algorithm~$\mathcal{A}$ that, given an instance~$(x,k)$ of~$P$, constructs an instance~$(x',k')$ of~$Q$ such that
	\begin{enumerate}
		\item~$(x,k)$ is a \yesI of~$P$ if and only if~$(x',k')$ is a \yesI of~$Q$,
		\item~$k' \leq g(k)$ for some computable function~$g$, and
		\item~$\mathcal{A}$ runs in \FPT time with respect to~$k$.
	\end{enumerate}
\end{definition}

It could also be the case that a parameterized problem is \NPh even for a fixed value of~$k$; we call such problems \paraNPh and, assuming~$\mathsf{P}\not=\NP$, such problems do not admit \XP algorithms.

\paragraph{Exponential-Time Hypothesis}
Our running-time lower bounds are based on the well-known Exponential-Time Hypothesis (ETH) of \citet{ImpagliazzoP2001}; see also \citet{ImpagliazzoPZ2001} and the survey of \citet{LokshtanovMS2011}. This conjecture states that roughly speaking, there is no algorithm solving \textsf{3-SAT} in time sub-exponential in the number of variables. Our results in this direction rely on the following theorems.

\begin{theorem}[\cite{ChenCFHJKX2005}]\label{thm:ETH:XP}
	Unless ETH fails, none of the following problems admits an algorithm running in time~$f(k)\cdot n^{o(k)}$ for any computable function~$f$: \probName{Dominating Set}, \probName{Multicolored Clique}, \probName{Grid Tiling}, and \probName{Set Cover}.
\end{theorem}

For fixed-parameter algorithms, we show running-time lower bounds using the following result.

\begin{theorem}[\cite{ImpagliazzoP2001,ImpagliazzoPZ2001}]\label{thm:ETH:SAT}
	Unless ETH fails, the \probName{$2$-Balanced~$3$-SAT} problem cannot be solved in~$2^{o(\xi+\mu)}$ time, where~$\xi$ is the number of variables and~$\mu$ is the number of clauses of the input formula~$\varphi$, respectively.
\end{theorem}

\subsection{Structural Parameters}
Let~$G=(V,E)$ be a graph. A set~$C\subseteq V$ is \emph{vertex cover} of~$G$ if~$G\setminus C$ is an edgeless graph. The \emph{vertex cover number}~$\operatorname{vc}(G)$ is the minimum size vertex cover in~$G$.

\begin{definition}[Tree decomposition]\label{def:treeDecomposition}
	A \emph{tree decomposition} of a graph~$G=(V,E)$ is a triple~$\mathcal{T} = (T, \beta, r)$, where~$T$ is a tree rooted at node~$r$ and~$\beta \colon V(T) \to 2^{V}$ is a mapping that satisfies:
	\begin{enumerate}
		\item~$\bigcup_{x \in V(T)} \beta(x) = V$;
		\item For every~$\{u, v\} \in E$ there exists a node~$x \in V(T)$, such that~$u, v \in \beta(x)$;
		\item For every~$u \in V$ the nodes~$\{x \in V(T) \mid u \in \beta(x)\}$ form a connected sub-tree of~$T$.
	\end{enumerate}
\end{definition}

To distinguish between the vertices of a tree decomposition and the vertices of the underlying graph, we use the term \emph{node} for the vertices of a given tree decomposition.

The \emph{width} of a tree decomposition~$\mathcal{T}$ is~${\max_{x \in V(T)} |\beta(x)|-1}$.
The \emph{tree-width} of a graph~$G$, denoted~$\operatorname{tw}(G)$, is the minimum width of a~tree decomposition of~$G$ over all possible tree-decompositions of the graph~$G$.

\section{Anonymous Preferences}\label{sec:anonymous_refugees}%

In our simplest model of refugee housing, we assume refugees are non-strategic, and we are concerned only with the preferences of inhabitants. In this sense, the refugees are, from the viewpoint of inhabitants, anonymous, and the preferences only take into account the number of refugees in the neighborhood of each inhabitant. Similar preferences have already been studied in different problems, such as anonymous hedonic games~\cite{BanerjeeKS2001,BogomolnaiaJ2002,Ballester2004}.

We formally capture this setting in the computational problem called the \ARH problem (\ARHshort for short). A preference of every inhabitant~$h\in\IH$ is a non-empty set~$A_h\subseteq [\deg(\ia(h))]_0$ of the approved numbers of refugees in the neighborhood. Our goal is to decide whether there is a housing~$\pl\colon R \to V_U$ that respects the preferences of all inhabitants.

\begin{definition}
	A housing~$\pl\colon\RF\to V_U$ is called \emph{inhabitant-respecting} if for every~$h\in\IH$ we have~$|N_G(\ia(h))\cap V_\pl| \in A_h$.
\end{definition}

If the approval set~$A_h$ for an inhabitant~$h\in\IH$ consists of consecutive numbers, we say that the inhabitant~$h$ approves an interval.
Also, as the refugees are indistinguishable, we will sometimes use~$\pl$ as a set of empty vertices of size~$|\RF|$ instead of a mapping.

\begin{example}\label{ex:ARH}
    \begin{figure}
        \centering
        \begin{tikzpicture}
            \node[draw,white,fill=black,circle,minimum width=0.7cm,label=180:{\normalsize$\{0,1\}$}] (v1) at (0,0) {$h_1$};
            \node[draw,white,fill=black,circle,minimum width=0.7cm,label=180:{\normalsize$\{0\}$}] (v2) at (0,1.5) {$h_2$};
            \node[draw,circle,minimum width=0.7cm] (v3) at (1.5,1.5) {$r$};
            \node[draw,circle,minimum width=0.7cm] (v4) at (1.5,0) {};

            \node at (0.75,-1) {\textcolor{red!80!black}{\Large\xmark}};
            
            \draw (v1) -- (v2) -- (v3) -- (v4) -- (v1);
        \end{tikzpicture}
        \hspace{1cm}
        \begin{tikzpicture}
            \node[draw,white,fill=black,circle,minimum width=0.7cm,label=180:{\normalsize$\{0,1\}$}] (v1) at (0,0) {$h_1$};
            \node[draw,white,fill=black,circle,minimum width=0.7cm,label=180:{\normalsize$\{0\}$}] (v2) at (0,1.5) {$h_2$};
            \node[draw,circle,minimum width=0.7cm] (v3) at (1.5,1.5) {};
            \node[draw,circle,minimum width=0.7cm] (v4) at (1.5,0) {$r$};
            
            \node at (0.75,-1) {\textcolor{green!80!black}{\Large\cmark}};
            
            \draw (v1) -- (v2) -- (v3) -- (v4) -- (v1);
        \end{tikzpicture}
        \caption{An instance of the \HRHshort problem from \Cref{ex:ARH} and two possible housings. On the left, we have housing that is not respecting---inhabitant $h_2$ does not accept any refugee in its neighborhood. On the right, the housing is respecting as $h_1$ has one refugee in the neighborhood and $1\in A_{h_1}$. Note that the approval sets of both inhabitants form intervals.}
        \label{fig:ARH:example}
    \end{figure}
	Let the topology be a cycle with four vertices. There are two inhabitants assigned to neighboring vertices. One of these inhabitants, call her~$h_1$, has approval set~$A_{h_1} = \{0,1\}$, and the second one, say~$h_2$, is not approving any refugees in his neighborhood, that is,~$A_{h_2}=\{0\}$. We have~$\RF = \{r\}$. The only valid housing is next to the inhabitant~$h_1$ as housing~$r$ in the neighborhood of~$h_2$ clearly does not respect his preferences. See \Cref{fig:ARH:example} for a more detailed discussion.
\end{example}

As our first result, we observe that even in a very simple settings, it is not guaranteed that any inhabitant-respecting refugees housing exists.

\begin{proposition}\label{lem:SDRH:IR:noInstance}
	There is an instance of the \ARH problem with no inhabitant-respecting refugees housing even if all inhabitants approve intervals.
\end{proposition}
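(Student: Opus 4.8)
The plan is to construct a small explicit instance and argue by exhaustive case analysis that no housing can satisfy all inhabitants simultaneously. The cleanest way to force a contradiction is to build a topology where the empty vertices form a ``bottleneck'': every refugee we place must land in the neighbourhood of two (or more) inhabitants whose approved intervals are incompatible, so satisfying one forces violating another. Concretely, I would take a topology in which there is a single empty vertex adjacent to two inhabitants, with exactly one refugee to place; give one inhabitant the approval set $\{0\}$ (tolerates no refugee) and the other the approval set $\{1\}$ (insists on exactly one). Since $|\RF| = 1$ and there is only one empty vertex, the refugee must go there; then the first inhabitant sees $1 \notin \{0\}$, so no inhabitant-respecting housing exists. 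Both approval sets $\{0\}$ and $\{1\}$ are trivially intervals, which is exactly the strengthening the proposition asks for. (Note the \ARHshort definition requires $\RF$ non-empty, so a one-refugee instance is legitimate.)

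The main design subtlety is making sure the instance is actually \emph{feasible as an instance} rather than vacuously broken for trivial reasons — i.e. there must genuinely exist at least one empty vertex (so $|V| > |N|$ or more precisely $V_U \neq \emptyset$), the inhabitants assignment $\ia$ must be injective, and the approved numbers must lie within $[\deg(\ia(i))]_0$ for each inhabitant. So I would spell out the degrees: for instance let $G$ be the path $i_1 - u - i_2$ on three vertices, with $\ia(i_1)$ and $\ia(i_2)$ the two endpoints and $u$ the empty middle vertex. Then $\deg(\ia(i_1)) = \deg(\ia(i_2)) = 1$, so $A_{i_1} = \{0\} \subseteq \{0,1\}$ and $A_{i_2} = \{1\} \subseteq \{0,1\}$ are both valid, and both are intervals. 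The single refugee $r$ has $V_U = \{u\}$, so the unique housing is $\pl(r) = u$; then $N(\ia(i_1)) \cap V_\pl = \{u\} \cap \{u\} = \{u\}$ has size $1 \notin A_{i_1}$, and we are done.

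I expect there to be essentially no hard step here — the argument is a one-line case check once the instance is fixed. If one wants the instance to look a little less degenerate (e.g. to pre-empt the objection that the inhabitants have degree $1$), an alternative is the four-cycle from the preceding example but with $A_{h_2} = \{0\}$ and $A_{h_1} = \{1,2\}$ and two refugees, or a star $K_{1,3}$ with the center occupied by an inhabitant with $A = \{2\}$, two leaves empty, one leaf occupied by an inhabitant with $A = \{0\}$ wait that center inhabitant has degree $3$ so its neighbourhood meets both empty leaves and also the other inhabitant — placing both refugees on the two empty leaves gives the center $2$ refugees (good) but is forced, and the leaf inhabitant has degree $1$ with empty neighbour... this gets fiddly, so I would keep the minimal path instance as the primary construction and mention the cycle variant only in passing. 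The only thing to be careful about is to verify the \emph{uniqueness} of the housing (which follows immediately from $|V_U| = |\RF|$ and injectivity of $\pl$) so that the case analysis is genuinely exhaustive.
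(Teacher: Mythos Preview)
Your proposal is correct and takes essentially the same approach as the paper: exhibit a tiny instance where an intolerant inhabitant (approval set $\{0\}$) is forced to receive a refugee neighbour because the housing is determined by $|V_U|=|\RF|$. The paper's instance is a triangle $K_3$ with one inhabitant ($A_i=\{0\}$) and two refugees, while yours is a three-vertex path with two inhabitants and one refugee; both work for the same reason, and your second inhabitant with $A_{i_2}=\{1\}$ is harmless but redundant (the contradiction already comes from $i_1$ alone once the unique housing is fixed).
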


To prove \Cref{lem:SDRH:IR:noInstance}, assume an instance with one inhabitant~$h$ and two refugees~$r_1$ and~$r_2$. Let the topology be~$K^3$, the inhabitant~$h$ be assigned to an arbitrary vertex, and let~$A_h = \{0\}$. There are exactly two possible housings, and in any of them, the inhabitant~$h$ has two neighboring refugees; therefore, there is no inhabitant-respecting housing.

In the previous example, we used the fact that the inhabitant~$h$ does not approve any refugees in his neighborhood. We call such inhabitants \emph{intolerant}. Despite the fact that the instance does not have an inhabitant-respecting housing even if~$A_h = \{1\}$, we observe that intolerant inhabitants can be safely removed.

\begin{proposition}\label{lem:SDRH:IR:intolerantRehousing}
	Let~$\mathcal{I} = (G,I,R,\ia,(A_h)_{h\in I})$ be an instance of the \ARH problem,~$h\in I$ be an inhabitant with~$A_h = \{0\}$, and~$F_h = \{\ia(j)\} \cup U_h$.~$\mathcal{I}$~admits an inhabitant-respecting housing iff the instance~$\mathcal{I}'=(G\setminus F_h,I\setminus\{h\},R,\ia,(A_{h'})_{h'\in I\setminus\{j\}})$ admits an inhabitant-respecting housing.
\end{proposition}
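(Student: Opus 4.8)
The plan is to prove the equivalence by showing that, because $j$ is intolerant, any inhabitant-respecting housing of $\mathcal{I}$ must leave every vertex of $F_j = \{\ia(j)\} \cup U_j$ empty of refugees, and conversely a housing on the reduced instance can be lifted back to $\mathcal{I}$ for free. The key observation is that an intolerant inhabitant forces $N(\ia(j)) \cap V_\pl = \emptyset$, i.e.\ $U_j \cap V_\pl = \emptyset$ (the vertices of $N(\ia(j))$ occupied by other inhabitants are irrelevant since housing maps into $V_U$), and of course $\ia(j)$ itself is occupied by $j$ and thus never used by $\pl$. So no refugee ever sits on a vertex of $F_j$ in any inhabitant-respecting housing of $\mathcal{I}$.

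For the forward direction, suppose $\mathcal{I}$ admits an inhabitant-respecting housing $\pl$. By the observation above, $V_\pl \cap F_j = \emptyset$, so $V_\pl \subseteq V(G \setminus F_j)$; moreover each refugee vertex still lies outside the inhabitant set $V_{I \setminus \{j\}}$, so $\pl$ is a valid housing for $\mathcal{I}'$. It remains to check that $\pl$ respects every remaining inhabitant $i \in I \setminus \{j\}$ in the graph $G \setminus F_j$. The only subtlety is that deleting $F_j$ could in principle change the neighbourhood $N(\ia(i))$ of some inhabitant $i$ whose original neighbourhood met $F_j$. But the vertices of $F_j$ that could be adjacent to $\ia(i)$ are exactly vertices of $U_j$ (and possibly $\ia(j)$), and these carry no refugees under $\pl$, so $N_{G \setminus F_j}(\ia(i)) \cap V_\pl = N_G(\ia(i)) \cap V_\pl$, and thus $|N_{G\setminus F_j}(\ia(i)) \cap V_\pl| \in A_i$ as required. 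Hence $\pl$ is inhabitant-respecting for $\mathcal{I}'$.

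For the backward direction, suppose $\mathcal{I}'$ admits an inhabitant-respecting housing $\pl'$. Since $V(G \setminus F_j)$ is a subset of $V$ disjoint from $F_j$, we may regard $\pl'$ as a housing in $\mathcal{I}$ (its image avoids $V_I$, because it avoids $V_{I \setminus \{j\}}$ and it avoids $\ia(j) \in F_j$). For inhabitant $j$: since $\mathrm{img}(\pl') \cap F_j = \emptyset$ and $U_j \subseteq F_j$, no refugee is placed in $N_G(\ia(j))$, so $|N_G(\ia(j)) \cap V_{\pl'}| = 0 \in \{0\} = A_j$. For every other inhabitant $i \in I \setminus \{j\}$ the same neighbourhood-equality argument as above (the deleted vertices carry no refugees) gives $|N_G(\ia(i)) \cap V_{\pl'}| = |N_{G \setminus F_j}(\ia(i)) \cap V_{\pl'}| \in A_i$. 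Thus $\pl'$ is inhabitant-respecting for $\mathcal{I}$.

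The routine-but-essential technical point — and the only place that needs care — is the neighbourhood-invariance claim: deleting $F_j$ can remove vertices from the open neighbourhoods of other inhabitants, so one must argue that exactly those removed vertices are never occupied by refugees (they lie in $U_j \cup \{\ia(j)\}$, and $U_j$ is refugee-free by intolerance of $j$, while $\ia(j)$ is refugee-free as an inhabitant vertex). Everything else is bookkeeping: checking injectivity is inherited, and checking that images stay inside the correct $V_U$. I do not expect any genuine obstacle here; the statement is essentially a clean ``safe reduction rule'' and the proof is a short case analysis on the two directions of the iff.
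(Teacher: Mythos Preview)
Your proposal is correct and follows essentially the same approach as the paper: verify both directions of the iff by observing that an intolerant inhabitant forces $U_j$ to be refugee-free, so the same housing works in both instances. Your treatment is in fact more careful than the paper's brief proof, as you explicitly verify the neighbourhood-invariance claim (that deleting $F_j$ does not alter $|N(\ia(i))\cap V_\pl|$ for $i\neq j$), which the paper leaves implicit.
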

\begin{proof}
	Let~$\mathcal{I}$ be a \yesI, let~$h\in I$ be an inhabitant with~$A_h = \{0\}$, and let~$\pl$ be an inhabitant-respecting refugees housing. Since~$\pl$ is an inhabitant-respecting housing, there is no refugee in the neighborhood of~$h$, so~$\pl$ is a solution even for~$\mathcal{I}'$. 
	
	In the opposite direction, let~$\mathcal{I}'$ be a \yesI and~$\pl'$ be an inhabitant-respecting housing in~$\mathcal{I}'$. As~$\pl'$ houses all refugees to~$V' = V\setminus\{N_G(\ia(h))\}$ the housing~$\pl'$ is also a solution for~$\mathcal{I}$.
\end{proof}

Due to the definition of approval sets, inhabitants without unoccupied neighborhoods are necessarily assumed intolerant and, therefore, can be safely removed by \Cref{lem:SDRH:IR:intolerantRehousing}. Hence, we assume only instances without intolerant inhabitants where every inhabitant has at least one unoccupied vertex in her neighborhood. 

\begin{proposition}\label{lem:SDRH:IR:bipartite}
	Let~$\mathcal{I} = (G,I,R,\ia,(A_h)_{h\in I})$ be an instance of the \ARH problem and~$\{u,v\}\in E(G)$ be an edge such that either~$u,v\in V_\IH$ or~$u,v\in V_U$. Then~$\mathcal{I}$~admits an inhabitant-respecting housing iff the instance~$\mathcal{I}'=((V(G),E(G)\setminus \{\{u,v\}\}),\IH,\RF,\ia,(A_h)_{h\in\IH})$ admits an inhabitant-respecting housing.
\end{proposition}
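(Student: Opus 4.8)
The plan is to isolate the one quantity on which feasibility of a housing depends — namely $|N(\ia(i))\cap V_\pl|$ for each inhabitant $i\in\IH$ — and to check that deleting an edge of the stated kind changes neither this quantity nor the collection of candidate housings. Since both directions of the claimed equivalence will then follow by using the \emph{same} housing $\pl$ on both sides, there is essentially nothing more to do.

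First I would observe that $\mathcal{I}'$ has the same vertex set and the same inhabitants assignment $\ia$ as $\mathcal{I}$, hence the same sets $V_\IH$ and $V_U$; therefore $\Pi_{G,\ia}=\Pi_{G',\ia}$, so every housing feasible for one instance is feasible for the other. The approval sets are copied verbatim; the only formal wrinkle is that if the deletion lowers $\deg(\ia(i))$, some values of $A_i$ may now exceed the new degree, but such values are never attained, since $|N(\ia(i))\cap V_\pl|\le\deg(\ia(i))$ in $G'$ as well, so they are harmless.

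Second, I would fix an arbitrary housing $\pl$ and an arbitrary inhabitant $i\in\IH$ and argue that $N_G(\ia(i))\cap V_\pl=N_{G'}(\ia(i))\cap V_\pl$. If $\ia(i)$ is not an endpoint of the deleted edge $\{u,v\}$, the neighbourhood of $\ia(i)$ is untouched and there is nothing to prove. If $\ia(i)$ is an endpoint of $\{u,v\}$, then $\ia(i)\in V_\IH$, which rules out the case $u,v\in V_U$ (that would put both endpoints in $V_U$, which is disjoint from $V_\IH$); so we are in the case $u,v\in V_\IH$, say $\ia(i)=u$. Then the single neighbour that disappears from $\ia(i)$ is $v$, and since $v\in V_\IH$ while $V_\pl\subseteq V_U$, we have $v\notin V_\pl$, so intersecting with $V_\pl$ is unaffected. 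Consequently $|N(\ia(i))\cap V_\pl|$ has the same value in $\mathcal{I}$ and in $\mathcal{I}'$ for every inhabitant, whence $\pl$ is inhabitant-respecting for $\mathcal{I}$ iff it is inhabitant-respecting for $\mathcal{I}'$, and the equivalence of the instances follows immediately.

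I do not expect a genuine obstacle here; the statement is immediate once the right invariant is identified. The only point requiring care is the case distinction on whether $\ia(i)$ is an endpoint of the removed edge, together with the remark — which is exactly why the proposition excludes edges with one endpoint in $V_\IH$ and one in $V_U$ — that any removed edge incident to an inhabitant vertex has its other endpoint occupied by an inhabitant and hence never occupied by a refugee.
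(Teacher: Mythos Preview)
Your argument is correct. The paper actually states this proposition without proof, treating it as self-evident, so there is no ``paper proof'' to compare against; your write-up is exactly the routine verification the paper omits. The case split on whether $\ia(i)$ is an endpoint of the deleted edge is the right way to organise it, and your remark about approval-set values exceeding the new degree is a nice bit of care (the paper silently ignores this point).
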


\Cref{lem:SDRH:IR:bipartite} directly implies that all graphs assumed in this section are naturally bipartite where one part consists of inhabitants and the other side by empty vertices.

We start our investigation of the computational complexity of the \ARHshort problem with a positive result showing that if the topology is a graph of maximum degree~$2$, then we can decide whether inhabitant-respecting housing exists in polynomial-time.

\begin{theorem}\label{lem:SDRH:IR:algo:maxdeg2}
	Every instance of the \ARH problem where the topology is a graph of maximum degree~$2$ can be solved in polynomial time.
\end{theorem}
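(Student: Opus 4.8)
The plan is to exploit the structural simplifications already established and the fact that a graph of maximum degree $2$ is a disjoint union of paths and cycles. By \Cref{lem:SDRH:IR:intolerantRehousing} we may assume there are no intolerant inhabitants, and by \Cref{lem:SDRH:IR:bipartite} we may delete every edge whose two endpoints are both in $V_\IH$ or both in $V_U$; after these preprocessing steps each remaining edge joins an inhabitant-vertex to an empty vertex, so every connected component is a path or a cycle that strictly alternates between $V_\IH$ and $V_U$. In such a component each inhabitant $i$ has $\deg(\ia(i))\in\{1,2\}$ empty neighbours, so its approval set $A_i$ is a subset of $\{0,1\}$ or of $\{0,1,2\}$; since intolerant inhabitants are gone, $A_i$ contains a positive number. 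The key observation is that whether a housing is inhabitant-respecting depends only on \emph{how many} refugees sit in each component, and within a component the constraints decompose further.

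First I would handle the components of maximum degree at most $1$ (isolated vertices and single edges): an isolated empty vertex can hold at most one refugee and imposes no constraint; a single edge $\{\ia(i),v\}$ with $v$ empty contributes a flexibility of "use $v$ or not", subject to $|\{v\}\cap V_\pl|\in A_i$, i.e. $v$ must be occupied if $0\notin A_i$ and must be empty if $1\notin A_i$ (and since $A_i\neq\emptyset$ at least one of these is allowed). Next, for a path or cycle component, I would observe that consecutive empty vertices are separated by exactly one inhabitant (by the alternation), and the choice at each empty vertex — occupied or not — must satisfy, for every inhabitant $i$ on the component, the constraint that the number of its occupied neighbours lies in $A_i$. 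Because every inhabitant has degree at most $2$, these are constraints over at most two boolean variables each, arranged along a path or cycle; this is exactly a constraint satisfaction / $2$-SAT-type instance on a cycle or path, which is solvable in polynomial time (for instance by a straightforward left-to-right dynamic program that, for a path, tracks the occupied/empty status of the current empty vertex, and for a cycle branches on the status of one fixed vertex and then runs the path DP). For each component and each feasible assignment the DP also records the number of refugees placed, so the set of achievable "refugee counts" per component is computed in polynomial time.

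Having, for each component $C$, the set $S_C\subseteq\N_0$ of numbers of refugees that can be accommodated inhabitant-respectingly inside $C$, the global problem becomes: choose $n_C\in S_C$ for every component so that $\sum_C n_C = m = |\RF|$. Each $S_C$ is an interval of integers between $0$ and the number of empty vertices in $C$ (it is downward-closed above the forced minimum and upward-closed below the forced maximum, because one can always "turn off" a non-forced empty vertex without violating any degree-$\le 2$ constraint, and symmetrically turn one on — this monotonicity is the one point needing a short argument, and I expect it to be the main obstacle, though it follows from the alternating structure since toggling a single empty vertex changes only its two inhabitant neighbours' counts by one and each such $A_i$, being a subset of $\{0,1\}$ or $\{0,1,2\}$ containing a positive value, is itself an interval). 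A sum of intervals is an interval, so the global instance is a \yesI iff $m$ lies in the interval $\big[\sum_C \min S_C,\ \sum_C \max S_C\big]$, which is checkable in polynomial time; moreover a witnessing housing is recovered by greedily distributing the $m$ refugees among the components within their ranges and then unrolling each component's DP. This establishes the claimed polynomial-time algorithm.
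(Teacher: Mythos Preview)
Your overall strategy---preprocess to an alternating bipartite collection of paths and cycles, run a per-component DP on the boolean occupancy variables, and then combine across components---is sound and close in spirit to the paper's proof, which also does a left-to-right DP on paths (with a branching trick for cycles) while keeping the running refugee count $|R|$ as a table dimension. However, your argument contains a genuine gap at precisely the step you yourself flag as ``the main obstacle''.

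The claim that every per-component feasibility set $S_C$ is an interval is false, because after removing intolerant inhabitants an approval set need \emph{not} be an interval: for a degree-$2$ inhabitant the set $A_i=\{0,2\}$ is perfectly legal (it is not $\{0\}$, so \Cref{lem:SDRH:IR:intolerantRehousing} does not apply). Concretely, take a three-vertex path $u_1\,\ia(i)\,u_2$ with $u_1,u_2\in V_U$ and $A_i=\{0,2\}$. The only inhabitant-respecting configurations place $0$ or $2$ refugees, so $S_C=\{0,2\}$, which is not an interval. With several disjoint copies of this component your interval-sum test would declare every total in $[0,2k]$ achievable, whereas only even totals are. Thus the ``toggle one empty vertex'' monotonicity argument fails exactly when some $A_i=\{0,2\}$.

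The repair is straightforward and keeps the running time polynomial: instead of summing intervals, combine the sets $S_C$ by a standard subset-sum-style DP over components, maintaining a boolean array indexed by the total number of refugees placed so far (all values are at most $n$, so this is $\Oh{n^2}$). This is essentially what the paper does implicitly by carrying the current value of $|R|$ as a dimension of its DP table; your decomposition into ``per-component $S_C$ then combine'' is a clean reorganisation of the same idea, but it only goes through once you drop the interval claim and do the DP combination instead.
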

\begin{proof}
    \newcommand{\DP}{\operatorname{DP}}
	Let~$G$ be a topology and~$C_1,\ldots,C_k$ an arbitrary ordering of its components. First, assume that we have a polynomial-time computable function~$\Call{CanHouse}{C_i,r}$ that returns~$\texttt{true}$ if it is possible to house~$r$ refugees on the component~$C_i$ in the inhabitant-respecting way, and \texttt{false} otherwise. Then, we can simply partition our refugees into~$k$ (possibly empty) sets~$\RF_1,\ldots,\RF_k$ and, if we find an inhabitant-respecting housing~$\pl_i$ of~$\RF_i$ on a component~$C_i$ for every~$i\in[k]$, by joining our partial housings~$\pl_i$ into a general housing~$\pl$, we obtain a solution for the entire instance. An implementation of such an approach is illustrated on \Cref{alg:ARH:P:maxdeg2:global}. The algorithm uses a dynamic programming technique to reduce its running time. Observe that the dynamic programming table is of size at most~$\Oh{k\cdot|\RF|} \in \Oh{n^2}$, and each cell can be computed in time~$\Oh{n \cdot T_{\operatorname{C}}}$, where~$T_{\operatorname{C}}$ is the running time of the~$\textsc{CanHouse}$ function. That is, the overall running time of \Cref{alg:ARH:P:maxdeg2:global} is~$\Oh{n^3\cdot T_{\operatorname{C}}}$ and its correctness is obvious---we check all possible partitions of refugees between the components and return \yes if and only if, for some partition, all components can house its part in an inhabitant-respecting way.
    
    \begin{algorithm}[tb]
        \renewcommand{\DP}{\texttt{T}}
		\caption{A dynamic programming algorithm that decides whether there is an inhabitant-respecting housing of~$R$ refugees.}
		\label{alg:ARH:P:maxdeg2:global}
        \begin{flushleft}
            \textbf{Input}: An instance~$\mathcal{I} = (G,\RF,\IH,\ia,(A_h)_{h\in\IH})$ of \ARHshort.\\
		      \textbf{Output}: \texttt{true} if there is an inhabitant-respecting housing~$\pl$ of~$\RF$ on~$G$, \texttt{false} otherwise.
        \end{flushleft}
		\begin{algorithmic}[1] %
            \State $C_1,C_2,\ldots,C_k \gets \Call{Components}{G}$ \Comment{Fix an ordering of components.}
            \For{$i\in[k]$ \textbf{and}~$j\in[|\RF|]$} \Comment{Initialize the dynamic programming table.}
                    \State~$\DP[i,j] \gets \texttt{undef}$
            \EndFor{}
            \State \Return \Call{SolveRec}{$1$,~$|\RF|$}
            \LComment{A recursive function that checks whether~$r$ refugees can be housed on components~$C_i,\ldots,C_k$.}
            \Function{SolveRec}{$i$,~$r$} 
			\If{$\DP[i,r] = \texttt{undef}$}
                \If{$i = k$}
                    \State~$\DP[i,r] \gets  \Call{CanHouse}{C_i,r}$
                \Else
                    \State~$\DP[i,r] \gets \texttt{false}$
                    \For{~$r' \in [r]$}
                        \If{$\Call{CanHouse}{C_i,r'}$ \textbf{and}~$\Call{SolveRec}{i+1,r-r'}$}
                            \State~$\DP[i,r] \gets \texttt{true}$
                        \EndIf{}
                    \EndFor
                \EndIf{}
            \EndIf{}
            \State \Return~$\DP[i,r]$
            \EndFunction{}
		\end{algorithmic}
	\end{algorithm}
    
    Note that \Cref{alg:ARH:P:maxdeg2:global} is not specific for graphs of maximum degree two and works for an arbitrary topology where the function \textsc{CanHouse} can be implemented in polynomial time for each of its components. Therefore, for our proof, it remains to show that this is indeed the case for topologies of maximum degree two.

    Let~$G$ be a connected topology of maximum degree two. Then,~$G$ is either a path or a cycle~\cite{Diestel2017}. Our algorithm for deciding whether an inhabitant-respecting housing exists is based on the dynamic programming approach combined with the gradual elimination of inhabitants' approval sets and exhaustive application of \Cref{lem:SDRH:IR:intolerantRehousing}. We first introduce an algorithm that solves the problem on a path, and then we show how to tweak the algorithm to solve also cycles.
	
	Let the topology be a path~$P=v_1v_2\ldots v_k$,~$k \geq 3$. For every vertex~$v_i\in V(P)$, we have a dynamic programming table~$\operatorname{DP}_{i}[f_\ell,f_c,\rho]$ that stores either \texttt{true} or \texttt{false} based on whether we can house~$\rho$ refugees on the sub-path~$v_i,\ldots,v_k$, subject to binary flags~$f_\ell$ and~$f_c$, where the flag~$f_\ell$ applies when we are handling an occupied vertex~$v_i$ and is set to~$0$ (to~$1$, respectively) if the vertex~$v_{i-1}$ is not used (is used) for housing in a solution. The flag~$f_c$, on the other hand, applies for empty vertices and is set to~$1$ if~$v_i$ is used for housing and to~$0$ if~$v_i$ is not housed in a solution.  
	Without loss of generality, we can assume that the vertices~$v_1$ and~$v_k$ are occupied by inhabitants; if it is not the case, we can attach to~$v_1$ (and similarly to~$v_k$) a pendant vertex~$v_0$ and make it occupied by an inhabitant~$h$ with~$A_{h} = \{0,1\}$. Since this newly added inhabitant~$h$ approves any neighborhood, he clearly does not affect the solution.
	
	First, we show how to compute the table for the vertex~$v_k$. Let~$h$ be an inhabitant such that~$\ia(h) = v_k$. By \Cref{lem:SDRH:IR:noInstance}, we have that~$h$ is not intolerant. Consequently, the approval set~$A_h$ is either~$\{1\}$ or~$\{0,1\}$. In any case, since~$v_k$ is a leaf, the sub-path~$v_k$ cannot house a refugee regardless of whether~$v_{k-1}$ is occupied or not. Moreover, by the approval set of~$h$,~$h$ does not care whether~$v_{k-1}$ is used for housing or not; we can set~$\operatorname{DP}_{k}[f_\ell,f_c,0] = \texttt{true}$ for all~$f_\ell,f_c\in\{0,1\}$ and~$\operatorname{DP}_{k}[f_\ell,f_c,\rho] = \texttt{false}$ otherwise.
	
	Next, the computation for empty vertices is relatively easy. Let~$v_i$,~$i\in[2,k-1]$, be an empty vertex. We set 
	\[
		\DP_{i}[f_\ell,f_c,\rho] = \begin{cases}
			\texttt{false} & \text{if~$\rho = 0$ and~$f_c = 1$, and}\\
			\DP_{{i+1}}[f_c,0,\rho-f_c] & \text{otherwise.}
		\end{cases}
	\]
	
	Finally, for a vertex~$v_i$,~$i\in[3,k-2]$, occupied by an inhabitant~$h$, the computation of the dynamic programming table varies based on the approval set of~$h$. By \Cref{lem:SDRH:IR:noInstance}, the inhabitant~$h$ is clearly not intolerant. Now, we show how the computation works for each approval set separately.
	
	\textbf{Case 1.~$\mathbf{A_h = \{0,1,2\}}$:} In this case, the inhabitant approves any neighborhood. Hence, we can compute the maximum value simply by setting~$\DP_{i}[f_\ell,f_c,\rho] = \DP_{{i+1}}[0,0,\rho]\lor\DP_{{i+1}}[0,1,\rho]$.
	
	\textbf{Case 2.~$\mathbf{A_h = \{1\}}$:} In this case, the inhabitant~$h$ requires exactly one refugee in the neighborhood. Hence, we handle the cases differently based on the occupancy of~$v_{i-1}$, which is captured in the flag~$f_\ell$. Formally, we set
	$\DP_{i}[f_\ell,f_c,\rho] = \DP_{{i+1}}[0,1-f_\ell,\rho]$.
	
	\textbf{Case 3.~$\mathbf{A_h = \{2\}}$:} The inhabitant~$h$ requires both neighbors to be occupied. Hence, if the vertex~$v_{i-1}$ is not part of the solution, we need to refuse the solution as invalid. Otherwise, the vertex~$v_{i+1}$ also needs to be housed in every solution, leading to the following:
	\[
		\DP_{i}[f_\ell,f_c,\rho] = \begin{cases}
			\DP_{{i+1}}[ 0, 1, \rho ] & \text{if~$f_\ell = 1$, and}\\
			\texttt{false} & \text{otherwise.}
		\end{cases}
	\]
	
	\textbf{Case 4.~$\mathbf{A_h = \{0,1\}}$:} We need to secure that at most one of~$v_{i-1}$ and~$v_{i+1}$ is part of a housing maximizing the number of housed refugees. Hence, if~$f_\ell = 1$, the vertex~$v_{i+1}$ cannot be 
	\[
		\DP_{i}[f_\ell,f_c,\rho] = \begin{cases}
			\DP_{{i+1}}[0,0,\rho]& \text{if~$f_\ell = 1$, and}\\
			\DP_{{i+1}}[0,0,\rho]\lor\DP_{{i+1}}[0,1,\rho] & \text{otherwise.}
		\end{cases}
	\]
	
	\textbf{Case 5.~$\mathbf{A_h = \{0,2\}}$:} We need to ensure that either none or both neighbors are occupied. Formally, we set~$\DP_{i}[f_\ell,f_c,\rho] = \DP_{{i+1}}[0,f_\ell,\rho]$.
	
	\textbf{Case 6.~$\mathbf{A_h = \{1,2\}}$:} This last case is symmetric to the case 4; if~$f_\ell = 1$, then the vertex~$v_{i+1}$ may or may not be housed, while if~$f_\ell = 0$, the vertex~$v_{i+1}$ is necessarily part of a solution housing. Formally, we set
	\[
	\DP_{i}[f_\ell,f_c,\rho] = \begin{cases}
		\DP_{{i+1}}[0,1,\rho]& \text{if~$f_\ell = 0$, and}\\
		\DP_{{i+1}}[0,0,\rho]\lor\DP_{{i+1}}[0,1,\rho] & \text{otherwise.}
	\end{cases}
	\]
	
	Once the dynamic programming table for every vertex~$v$ is correctly calculated, we can check whether~$\DP_{1}[0,0,|\RF|]$ is set to \texttt{true}. If this is the case, the path can house all refugees, and the algorithm returns \yes. Otherwise, the algorithm returns \no. The size of the dynamic table for a single vertex~$v$ is~$\Oh{n}$ and each cell can be computed in constant time. Hence, the overall running time of the algorithm is~$\Oh{n^2}$.
    
	If the graph~$G$ is a cycle, there is at least one vertex occupied by an inhabitant. Without loss of generality, let~$v_1$ be occupied by an inhabitant~$h\in I$. The idea of the algorithm is, based on the approval set of the inhabitant~$h$, to try to add all possible neighbors to a solution housing, remove the vertex~$v_1$ together with its neighbors from~$G$, update the approval sets of the inhabitants in the second neighborhood of~$v_1$, and use the previous algorithm for paths to decide the reduced instance.

    Regardless of whether~$G$ is a path or a cycle, we can verify whether it is possible to house~$|\RF|$ refugees on~$G$ in~$\Oh{n^2}$ time. From the running time analysis of \Cref{alg:ARH:P:maxdeg2:global}, we obtain that there is an algorithm that solves \ARHshort problem on graphs of the maximum degree at most two in~$\Oh{n^3\cdot n^2} = \Oh{n^5}$ time.
\end{proof}

Unfortunately, as the following theorem shows, the bounded-degree condition from \Cref{lem:SDRH:IR:algo:maxdeg2} cannot be relaxed anymore.

\begin{theorem}\label{thm:ARH:NPh}
	The \ARH problem is \NPc even if the topology is a graph of maximum degree~$3$ and all inhabitants approve intervals.
\end{theorem}
\begin{proof}
	Given a housing~$\pl$, it is easy to verify in polynomial time whether~$\pl$ is inhabitant-respecting by enumerating all inhabitants and comparing their neighborhoods with approval lists. Thus, \ARHshort is indeed in \NP.
	
	For \NPhness, we present a polynomial-time reduction from a variant of the \probName{$2$-Balanced~$3$-SAT} problem, which is known to be \NPc~\cite{Tovey1984,FialaGK2005,BermanKS2003}. In this variant of \probName{$3$-SAT}, we are given a propositional formula~$\varphi$ with~$\xi$ variables~$x_1,\ldots,x_\xi$ and~$\mu$ clauses~$C_1,\ldots,C_\mu$ such that each clause contains at most~$3$ literals and every variable appears in at most~$4$ clauses -- at most twice as a positive literal and at most twice as a negative literal. Later in this paper, we will refer to this reduction as \emph{basic reduction}.
	
	We construct an equivalent instance~$\mathcal{I}$ of \ARHshort as follows. We represent every variable~$x_i$,~$i\in[\xi]$, by a single \emph{variable gadget}~$X_i$ which is a path~$t_iv_if_i$. The vertex~$v_i$ is occupied by an inhabitant~$g_i$, called \emph{variable-guard}, with approval set~$\{1\}$. All other vertices are empty, and we call the vertex~$t_i$ the \emph{$t$-port} and the vertex~$f_i$ the \emph{$f$-port}. Every \emph{clause}~$C_j$,~$j\in\mu$, is represented by a single vertex~$c_j$ occupied by an inhabitant~$h_j$, called \emph{clause-guard}, who approves the interval~$[1,|C_j|]$ and is connected to the~$t$-port of the variable gadget~$X_i$ if the variable~$x_i$ occurs as a positive literal in~$C_j$ and to the~$f$-port of~$X_i$ if~$x_i$ occurs as a negative literal in~$C_j$. To complete the reduction, we set~$|\RF|=\{r_1,\ldots,r_\xi\}$.
	
	For the correctness of the construction, let~$\varphi$ be a satisfiable \mbox{\probName{$2$-Balanced~$3$-SAT}} formula and~$\alpha$ be a truth assignment. For every variable~$x_i$, we assign the refugee~$r_i$ to~$t_i$ if~$\alpha(x_i) = 1$ and to~$f_i$ if~${\alpha(x_i) = 0}$, respectively. This housing is clearly a solution of~$\mathcal{I}$ since every variable-guard has exactly one refugee in the neighborhood and every clause-guard~$h_j$ has at least one refugee in the neighborhood since~$\alpha$ satisfies all clauses.
	
	In the opposite direction, observe that due to variable-guards, there is exactly one refugee assigned to every variable gadget and, thus, in every assignment~$\pi$ there is no variable gadget~$X_i$ such that the~$t$-port and the~$f$-port are occupied at the same time. Hence, we can set~$\alpha(x_i)$ equal to~$1$ if and only if the~$t$-port is occupied by a refugee. Clearly,~$\alpha$ is a truth assignment as~$\pi$ has to satisfy each inhabitant occupying clause vertex.
	
	By definition, every clause contains at most~$3$ literals, and thus the degree of every vertex~$c_j$,~${j\in[\mu]}$, is at most~$3$. For every variable gadget~$X_i$,~$i\in[\xi]$, the vertex~$v_i$ has degree~$2$ and both~$t$-port and~$f$-port have degree at most~$3$ -- they are adjacent to~$v_i$ and at most two vertices representing clauses. Hence, the bounded-degree condition holds, and the construction can be clearly done in polynomial time, finishing the proof.
\end{proof}

Since the above results clearly show that the problem is computationally hard even in simple settings, we turn our attention to the parameterized complexity of the \ARHshort problem. In particular, we study the problem's complexity from the viewpoint of natural parameters, such as the number of refugees, the number of inhabitants, the number of empty vertices, and various structural parameters restricting the shape of the topology.

We start with instances, where the number of refugees to house is small. It turns out that even with such a strong restriction, one cannot expect fixed-parameter tractability.

\begin{theorem}\label{thm:ARH:Wh:R}
	The \ARH problem is \Wh[2] parameterized by the number of refugees~$|R|$ even if all inhabitants approve intervals.
\end{theorem}
\begin{proof}
	We reduce from the \probName{Dominating Set} problem, which is known to be \Wc[2] when parameterized by the solution is size~$k$~\cite{DowneyF1995}. 
	The instance~$\mathcal{I}$ of \probName{Dominating Set} consists of a simple undirected graph~$H$ and an integer~$k\in\N$. The goal is to decide whether there is a set~$D\subseteq V(H)$ of size at most~$k$ such that each vertex~$v\in V(H)$ is either in~$D$ or at least one of its neighbors is in~$D$.
	
	We construct an equivalent instance~$\mathcal{J}$ of the \ARHshort problem as follows. We start by defining the topology~$G$. For each vertex~${v\in V(H)}$ we add two vertices~$\ell_v$ and~$p_v$. The vertex~$\ell_v$ represents the original vertex and is intended to be free for refugees. The vertex~$p_v$ is occupied by an inhabitant~$h_v$ with~$A_{h_v} = [1,|N_H[v]|]$. This inhabitant ensures that there is at least one refugee housed in the closed neighborhood of~$p_v$. The edge set of the topology~$G$ is~$\bigcup_{v\in V}\{\{p_v,\ell_w\}\mid w\in N_H[v]\}$. To complete the construction, we set~$|R|=\{r_1,\ldots,r_k\}$.
	For an overview of our construction, we refer the reader to \Cref{fig:SDRH:IR:Wh:refugees}.
	
	\begin{figure}[tb]
        \centering
		\begin{tikzpicture}
			\node[draw,circle,inner sep=1.5pt,label=90:{\small$ \ell_1$}] (v1) at (0,0) {};
			\node[draw,circle,inner sep=1.5pt,label=90:{\small$ \ell_2$}] (v2) at (1,0) {};
			\node[draw,circle,inner sep=1.5pt,label=90:{\small$ \ell_3$}] (v3) at (2,0) {};
			\node[draw,circle,inner sep=1.5pt,label=90:{\small$ \ell_4$}] (v4) at (3,0) {};
			\node at (4,0) {$\dots$};
			\node[draw,circle,inner sep=1.5pt,label=90:{\small$ \ell_n$}] (v5) at (5,0) {};
			
			\node[white,fill=black,circle,inner sep=1.5pt,label=270:{\small$[1,3]$}] (n1) at (0,-1.5) {$h_1$};
			\node[white,fill=black,circle,inner sep=1.5pt,label=270:{\small$[1,2]$}] (n2) at (1,-1.5) {$h_2$};
			\node[white,fill=black,circle,inner sep=1.5pt,label=270:{\small$[1,4]$}] (n3) at (2,-1.5) {$h_3$};
			\node[white,fill=black,circle,inner sep=1.5pt,label=270:{\small$[1,3]$}] (n4) at (3,-1.5) {$h_4$};
			\node at (4,-1.5) {$\dots$};
			\node[white,fill=black,circle,inner sep=1.5pt,label=270:{\small$[1,3]$}] (n5) at (5,-1.5) {$h_n$};
			
			\draw (n1) edge (v1) edge (v3) edge (v4);
			\draw (n2) edge (v2) edge (v5);
			\draw (n3) edge (v3) edge (v1) edge (v4) edge (v5);
			\draw (n4) edge (v4) edge (v1) edge (v3);
			\draw (n5) edge (v5) edge (v2) edge (v3);
		\end{tikzpicture}
		\caption{An illustration of the construction used in the proof of \Cref{thm:ARH:Wh:R}.}
		\label{fig:SDRH:IR:Wh:refugees}
	\end{figure}
	
	Let~$\mathcal{I}$ be a \yesI and~$D$ be a dominating set of size~$k$. For every vertex~$v\in D$, we house a refugee in the vertex~$\ell_v$. Since~$D$ was a dominating set of size~$k$, in the closed neighborhood of every~$v\in V$ in~$G$, there is at least one vertex~$u\in D$. Therefore, for every inhabitant~$h_v$, there is at least one refugee in his neighborhood, and~$\mathcal{J}$ is indeed a \yesI.
	
	In the opposite direction, let~$\mathcal{J}$ be a \yesI and~$\pl$ be a solution housing. We set~$D$ to be~$\{v\in V(H)\mid \exists h\in\IH\colon \pl(h)=v\}$. Due to the definition of approved intervals of the inhabitants, it holds for every~$v\in V(H)$ either~$v$ or at least one of his neighbors is in~$D$, as otherwise the inhabitant~$h_v$ would not be respected.
	
	To complete the proof, we recall that~$|\RF|=k$ and, hence, the presented reduction is indeed a parameterized reduction.
\end{proof}

We complement \Cref{thm:ARH:Wh:R} with an algorithm that runs in time that matches the lower bound given in this theorem. The following result, however, shows that the problem can be solved in polynomial time for a constant number of refugees.

\begin{theorem}\label{thm:ARH:XP:R}
	The \ARH problem can be solved in~$n^{\Oh{|R|}}$ time. That is, \ARH is in \XP parameterized by the number of refugees. Moreover, unless ETH fails, there is no algorithm that solves \ARH in~$f(|R|)\cdot n^{o(|R|)}$ time for any computable function~$f$.
\end{theorem}
\begin{proof}
	Our algorithm is a simple brute-force. Let~$V_U = V(G)\setminus V_I$ be the number of empty vertices and let~$n = |V|$. Note that~$|V_U| \leq n$. We try all subsets of~$V_U$ of size~$|\RF|$, and for each such subset, we check in linear time whether the housing (note that it does not matter which refugees are housed where) is inhabitant-respecting. If at least one subset leads to an inhabitant-respecting housing, we return \yes. Otherwise, we return \no. As there are~$|V_U|^\Oh{|\RF|} = n^\Oh{|\RF|}$ such subsets and each can be verified in polynomial time, the total running time is~$n^\Oh{|\RF|}$.
	
	For the running time lower-bound, recall that according to \Cref{thm:ETH:XP}, the \probName{Dominating Set} problem cannot be solved in~$f(k)\cdot |\mathcal{I}|^{o(k)}$ time for any computable function~$f$, unless ETH fails. Assume that there is an algorithm~$\mathcal{A}$, that solves \ARHshort in~$f(|\RF|)\cdot n^{o(|\RF|)}$ time. Then we can reduce an instance~$\mathcal{I}$ of the \probName{Dominating Set} problem to an equivalent instance of the \ARHshort problem using the construction from \Cref{thm:ARH:Wh:R}, solve the reduced instance using algorithm~$\mathcal{A}$, and return the same response for~$\mathcal{I}$. As the construction use~$|R|=k$, this is an algorithm for \probName{Dominating Set} running in~$f(k)\cdot |\mathcal{I}|^{o(k)}$ time, which contradicts \Cref{thm:ETH:XP}.
\end{proof}

As the number of refugees is not a parameter promising tractable algorithm, even if all inhabitants approve intervals, we focus on the case where the number of inhabitants is small. 

The first result is an \FPT algorithm for this parameterization under the assumption that each inhabitant approves the interval. Our algorithm is based on integer linear programming formulation of the problem, and we use the following result of \citet{EisenbrandW2018}.

\begin{theorem}[{\cite[Theorem 2.2]{EisenbrandW2018}}]\label{thm:ILP:FPT:constraints}
	Integer linear program~${\mathbb{A}x\leq b}$,~$x \geq 0$, with~$n$ variables and~$m$ constraints can be solved in~$$(m\Delta)^\Oh{m}\cdot||b||_\infty^2$$ time, where~$\Delta$ is an upper-bound on all absolute values in~$\mathbb{A}$.  
\end{theorem}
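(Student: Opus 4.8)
This is quoted from the literature, so nothing new is proved here; but the route to it is via the Steinitz Lemma, and the sign that forces that route is the absence of any factor of $n$ in the running time. First I would put the programme into standard equality form: add one slack variable per row, so that $\mathbb{A}x\le b$, $x\ge\mathbf{0}$ becomes $\mathbb{A}'x'=b$, $x'\ge\mathbf{0}$ with $\mathbb{A}'=[\mathbb{A}\mid I_m]$. This keeps the number of rows at $m$ and the absolute-value bound at $\Delta$ (the new columns are unit vectors), so it suffices to treat equality systems. Now read an integral feasible point as a repetition-allowed selection of columns of $\mathbb{A}'$ that sums to $b$ (column $a_j$ taken $x'_j$ times), so that feasibility becomes the question whether some such selection sums exactly to $b$, and $\max c^\top x$ becomes a maximum-weight such selection.

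The crux is confining the partial sums to a small set. Listing the selected columns as $g_1,\dots,g_t$, one has $\|g_i\|_\infty\le\Delta$ and prefix sums $p_k=\sum_{i\le k}g_i$ with $p_t=b$. By the Steinitz Lemma for the $\ell_\infty$-norm on $\mathbb{R}^m$ -- whose constant is $\Oh{m}$ -- the $g_i$ can be reordered so that every $p_k$ lies within $\ell_\infty$-distance $\Oh{m\Delta}$ of the segment from $\mathbf{0}$ to $b$; call this ``tube'' $T$. Two counts then follow: the number of integer points of $T$ is $(m\Delta)^{\Oh{m}}\cdot\|b\|_\infty$ (a single cross-section contributes $(\Oh{m\Delta})^{m}$ points, and $\Oh{\|b\|_\infty}$ cross-sections suffice along the segment), and the number of \emph{distinct} columns of $\mathbb{A}'$ is at most $(2\Delta+1)^{m}=(m\Delta)^{\Oh{m}}$, each being a $\Delta$-bounded integer vector in $\mathbb{R}^m$.

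Finally I would build a digraph $D$ on $T\cap\mathbb{Z}^m$ with an arc $u\to u+a_j$ of weight $c_j$ whenever $a_j$ is a column of $\mathbb{A}'$ and both $u,u+a_j\in T$; by the two counts $D$ has $(m\Delta)^{\Oh{m}}\cdot\|b\|_\infty$ vertices and arcs. By the Steinitz reordering the programme is feasible exactly when $b$ is reachable from $\mathbf{0}$ in $D$ -- decided by a breadth-first search, well inside the claimed bound -- and its optimum is the maximum weight of a walk from $\mathbf{0}$ to $b$. The main obstacle is that $D$ is not acyclic, so one cannot simply take a longest path: here I would note that a directed cycle of $D$ is a repetition-allowed column selection summing to $\mathbf{0}$, i.e.\ a nonzero $z\ge\mathbf{0}$ with $\mathbb{A}'z=\mathbf{0}$, so the feasible programme is unbounded iff some such $z$ has $c^\top z>0$ (a separate, small check), and otherwise every cycle of $D$ has nonpositive weight, whence a Bellman--Ford-type relaxation run once per vertex computes the maximum-weight walk in $\Oh{|V(D)|\cdot|E(D)|}=(m\Delta)^{\Oh{m}}\cdot\|b\|_\infty^2$ time, matching the statement. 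The genuinely substantive ingredients are the Steinitz Lemma, which supplies the tube, and the reduction of (in)feasibility and optimisation to a walk problem on the resulting small digraph; the unboundedness test and the Bellman--Ford accounting that yields the second factor of $\|b\|_\infty$ are routine.
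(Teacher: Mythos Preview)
The paper does not prove this theorem; it is quoted verbatim as a black-box tool from \cite{EisenbrandW2018} and invoked only to derive the \FPT bounds in \Cref{thm:ARH:IR:FPT:I:intervals} and \Cref{thm:ARH:IR:FPT:I:delta}. There is therefore nothing in the paper to compare your sketch against. That said, your outline is a faithful summary of the actual Eisenbrand--Weismantel argument: the reduction to equality form, the Steinitz reordering to confine prefix sums to a tube of $(m\Delta)^{\Oh{m}}\cdot\|b\|_\infty$ lattice points, the digraph on that tube with column-labelled arcs, and the Bellman--Ford pass (with the positive-cycle/unboundedness check) that yields the second factor of $\|b\|_\infty$ are exactly the ingredients of their proof.
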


\begin{theorem}\label{thm:ARH:IR:FPT:I:intervals}
	If all inhabitants approve intervals, then the \ARH problem can be solved in~$2^\Oh{|\IH|\cdot\log|\IH|}$ time. That is, \ARH is in \FPT when parameterized by the number of inhabitants~$|\IH|$.
\end{theorem}
\begin{proof}
	We solve the \ARHshort problem using an integer linear programming formulation of the problem. We introduce one binary variable~$x_v$ for every empty vertex~$v\in V_U$ representing if a refugee is housed on~$v$ or not. Next, we add the following constraints.
	\begin{align}
		\forall h \in \IH\colon \sum_{v\in N_G(\ia(h))} x_v &\geq \min A_h \label{eq:lb}\\
		\forall h \in \IH\colon \sum_{v\in N_G(\ia(h))} x_v &\leq \max A_h \label{eq:ub}\\
		\sum_{v\in V_U} x_v &= |\RF|.\label{eq:refnum}
	\end{align}
	
	\Cref{eq:lb,eq:ub} ensure that the number of refugees in the neighborhood of each inhabitant is in its approved interval, while \Cref{eq:refnum} secures that all refugees are housed somewhere. Using \Cref{thm:ILP:FPT:constraints}, we see that the given integer program can be solved in time~$|\IH|^\Oh{|\IH|}\cdot n^\Oh{1} = 2^\Oh{|\IH|\cdot\log|\IH|}\cdot n^\Oh{1}$, as~$m = 2|\IH|+1$,~$\Delta = 1$, and~$||b||_\infty \leq n$. That is, \ARHshort is in \FPT parameterized by the number of inhabitants~$|\IH|$.
\end{proof}

Note that it would be possible to provide a different ILP formulation of the problem and use the famous theorem of \citet{Lenstra1983} to show membership in \FPT; however, this would yield an algorithm with much worse (i.\,e., doubly-exponential) running-time.

As we will show later, the result from \Cref{thm:ARH:IR:FPT:I:intervals} cannot be easily generalized to the case with inhabitants approving general sets. However, we can show that if the number of intervals in each approval set is bounded, the problem is still fixed-parameter tractable.

\begin{theorem}\label{thm:ARH:IR:FPT:I:delta}
	The \ARH problem is fixed-parameter tractable when parameterized by the combined parameter the number of inhabitants~$|\IH|$ and the maximum number of disjoint intervals~$\delta$ in the approval sets.
\end{theorem}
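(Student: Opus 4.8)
The plan is to combine a simple branching step over the shape of each inhabitant's approval set with the integer linear programming formulation already developed in the proof of \Cref{thm:ARH:IR:FPT:I:intervals}. For every inhabitant $i \in \IH$, write the approval set as a union $A_i = \bigcup_{j \in [\delta_i]} [\operatorname{low}_j(i), \operatorname{high}_j(i)]$ of $\delta_i \le \delta$ pairwise disjoint maximal intervals; this decomposition is unique and computable in polynomial time. The key observation is that a housing $\pl$ is inhabitant-respecting if and only if for every $i \in \IH$ there is an index $j(i) \in [\delta_i]$ with $|N(\ia(i)) \cap V_\pl| \in [\operatorname{low}_{j(i)}(i), \operatorname{high}_{j(i)}(i)]$.

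First I would branch over all tuples $(j(i))_{i \in \IH}$ with $j(i) \in [\delta_i]$; there are at most $\delta^{|\IH|}$ of them, and they can be enumerated within the claimed time. Fix one such tuple. For this tuple the task of finding a housing with $|N(\ia(i)) \cap V_\pl| \in [\operatorname{low}_{j(i)}(i), \operatorname{high}_{j(i)}(i)]$ for every $i$ is literally an instance of the single-interval case: we reuse the ILP with one binary variable $x_v$ per empty vertex $v \in V_U$, constraints \Cref{eq:lb,eq:ub} with $\operatorname{low}(i) = \operatorname{low}_{j(i)}(i)$ and $\operatorname{high}(i) = \operatorname{high}_{j(i)}(i)$, and the global constraint \Cref{eq:refnum}. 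As in the proof of \Cref{thm:ARH:IR:FPT:I:intervals}, this ILP has $m = 2|\IH|+1$ constraints, all coefficients equal to $1$, and $\|b\|_\infty \le n$, so by \Cref{thm:ILP:FPT:constraints} it is solvable in $|\IH|^{\Oh{|\IH|}} \cdot n^{\Oh{1}}$ time. We output \yes as soon as some branch produces a feasible ILP, and \no otherwise.

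For correctness, any feasible solution of the ILP in some branch is an inhabitant-respecting housing, since $[\operatorname{low}_{j(i)}(i), \operatorname{high}_{j(i)}(i)] \subseteq A_i$. Conversely, given an inhabitant-respecting housing $\pl$, pick for each $i$ the unique interval of the decomposition of $A_i$ containing $|N(\ia(i)) \cap V_\pl|$; the corresponding branch admits $\pl$ as a feasible ILP solution. The total running time is $\delta^{|\IH|} \cdot |\IH|^{\Oh{|\IH|}} \cdot n^{\Oh{1}}$, which is fixed-parameter tractable in the combined parameter $|\IH| + \delta$.

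There is no serious obstacle here; the only point needing a moment of care is that the per-branch subproblem must be exactly the single-interval ILP, so one should check that replacing each approval set by one of its intervals leaves the global cardinality constraint \Cref{eq:refnum} untouched — it does, being identical in every branch. One could alternatively avoid branching and encode the disjunction directly with $\Oh{\delta \cdot |\IH|}$ auxiliary indicator variables and big-$M$ inequalities, but then the coefficient bound $\Delta$ in \Cref{thm:ILP:FPT:constraints} would grow with $n$; the clean branching argument keeps $\Delta = 1$ and is what I would write.
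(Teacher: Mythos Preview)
Your proposal is correct and follows exactly the approach taken in the paper: branch over the at most $\delta^{|\IH|}$ choices of one interval per inhabitant and solve each branch with the ILP from \Cref{thm:ARH:IR:FPT:I:intervals}, obtaining the running time $\delta^{|\IH|}\cdot|\IH|^{\Oh{|\IH|}}\cdot n^{\Oh{1}}$. Your write-up is simply more detailed than the paper's (which states the guessing step and the running time in two sentences), but the method and the bound are identical.
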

\begin{proof}
	The basic idea of the algorithm is to guess for each inhabitant~$h\in\IH$ his or her \emph{effective interval} and then use \Cref{thm:ARH:IR:FPT:I:intervals} to decide whether the guess is correct. More formally, it is given that each inhabitant's~$h\in\IH$ approval set~$A_h$ consists of at most~$\delta$ disjoint intervals. Therefore, we guess for each inhabitant her effective interval~$E_h$ and use it as an approval set for~$h$. If we assume only~$E_h$ for each~$h\in\IH$, then every inhabitant approves only an interval and we can use \Cref{thm:ARH:IR:FPT:I:intervals} to decide in time~$2^\Oh{|\IH|\cdot\log|\IH|}\cdot n^\Oh{1}$ whether our guess is correct. If at least one guess is correct, we return \yes. Otherwise, the result is \no. There are~$\delta^\Oh{|\IH|}\in 2^\Oh{|\IH|\cdot\log(\delta)}$ possible guesses and each guess can be verified in~$2^\Oh{|\IH|\cdot\log|\IH|}\cdot n^\Oh{1}$ time. Therefore, the overall running time is~$2^\Oh{|\IH|\cdot\log(\delta)}\cdot2^\Oh{|\IH|\cdot\log|\IH|}\cdot n^\Oh{1}$, which is clearly in \FPT.
\end{proof}

Now, we show that the parameter~$\delta$ from \Cref{thm:ARH:IR:FPT:I:delta} cannot be dropped while keeping the problem tractable.

\begin{theorem}\label{thm:ARH:Wh:I}
	The \ARH problem is \Wh parameterized by the number of inhabitants~$|\IH|$.
\end{theorem}
\begin{proof}
	Our reduction is similar to the one of \citet[Theorem 3.1]{KnopKMT2019}. We reduce from the \probName{Multicolored Clique} problem where we are given a~$k$-partite graph~$G=(V_1\cup\cdots\cup V_k,E)$ and the goal is to find a complete subgraph with~$k$ vertices such that it contains a vertex from every~$V_i$,~$i\in[k]$. \probName{Multicolored Clique} is known to be \Wh with respect to~$k$~\cite{FellowsHRV2009}. We may assume that every color class~$V_i$,~$i\in[k]$, is of size~$n$. By~$E_{i,j}$ we denote the set of edges between color classes~$V_i$ and~$V_j$, that is,~${E_{i,j} = \{\{u,v\}\mid u\in V_i \land v\in V_j\}}$ and we may assume that for every pair of distinct~$i,j\in[k]$ we have~$|E_{i,j}|=m$.
	
	We begin the reduction by fixing a bijection~$\nu_i\colon V_i\to[n]$ for every~$i\in[k]$ and a bijection~$\epsilon_{i,j}\colon E_{i,j}\to[m]$ for every pair of distinct~$i,j\in[k]$. Next, for every color class~$V_i$, we add a \emph{vertex-selection gadget}~$S_i$ consisting of~$n$ vertices and leave these vertices empty. The number of refugees housed on the vertices of~$S_i$ will correspond to a vertex in~$V_i$ that is part of the clique. Then, for every~$E_{i,j}$,~$i,j\in[k]$, we add a set~$T_{i,j}$ with~$m\cdot n^2$ empty vertices and connect these to the vertex~$M_{i,j}$. The vertex~$M_{i,j}$ is occupied by an inhabitant with approval set~$\{tn^2\mid t\in[m]\}$ and we call~$T_{i,j}$ an \emph{edge-selection gadget}. Similarly to the vertex selection gadget, the number of refugees assigned to~$T_{i,j}$ will correspond to the edge selected for the solution. To ensure that the choice performed in the vertex-selection gadgets and the edge-selection gadgets is compatible, we introduce two vertices~$G_{i,j}$ and~$G_{j,i}$ for every~$E_{i,j}$,~$i,j\in[k]$ and~$i\not= j$. The vertex~$G_{i,j}$ is occupied by an inhabitant with approval set~$\{\epsilon_{i,j}(e)\cdot n^2 + \nu_i(v) \mid e\in E_{i,j} \land v\in e \land v\in V_i\}$ and is adjacent to every vertex in~$S_i$, and the vertex~$G_{j,i}$ is occupied by an inhabitant with approval set~$\{\epsilon_{i,j}(e)\cdot n^2 + \nu_j(v) \mid e\in E_{i,j} \land v\in e \land v\in V_j\}$ and is adjacent to every vertex in~$S_j$. To complete the construction, we set~$|R| = \binom{k}{2}\cdot m\cdot n^2 + k\cdot n$ and introduce the same number of auxiliary vertices of degree~$0$ which are intended for the remaining refugees not assigned to vertex- and edge-selection gadgets. For an overview of our construction, please refer to \Cref{fig:ARH:IR:Wh:I}.
	\begin{figure}[bt!]
        \centering
		\begin{tikzpicture}
			\node[circle,inner sep=1pt,white,fill=black] (Gij) at (0,0.8) {\small$G_{i,j}$};
			\node[circle,inner sep=1pt,white,fill=black] (Gji) at (0,-0.8) {\small$G_{j,i}$};
			
			\node[circle,inner sep=1pt,white,fill=black] (Mij) at (3,0) {\small$M_{i,j}$};
			
			\node at (-1.9,1.8) {\small$S_i$};
			\draw[rounded corners=2pt,fill=gray!15] (-1.7,2) rectangle (-1.3,0.4);
			\foreach[count=\i] \y in {1.8,1.4,1,0.6}{
				\node[draw,circle,inner sep=1pt,fill=white] (v\i) at (-1.5,\y) {};
				\draw (v\i) -- (Gij);
			}

			\node at (-1.9,-1.8) {\small$S_j$};
			\draw[rounded corners=2pt,fill=gray!15] (-1.7,-2) rectangle (-1.3,-0.4);
			\foreach[count=\i] \y in {-1.8,-1.4,-1,-0.6}{
				\node[draw,circle,inner sep=1pt,fill=white] (u\i) at (-1.5,\y) {};
				\draw (u\i) -- (Gji);
			}
			
			\node at (2.1,1.8) {\small$T_{i,j}$};
			\draw[rounded corners=2pt,fill=gray!15] (1.8,2) rectangle (1.2,-2);
			\foreach[count=\i] \y in {1.8,1.4,1,0.6,0.2,-0.2,-0.6,-1,-1.4,-1.8}{
				\node[draw,circle,inner sep=1pt,fill=white] (e\i) at (1.5,\y) {};
				\draw (e\i) -- (Gij);
				\draw (e\i) -- (Gji);
				\draw (e\i) -- (Mij);
			}
		\end{tikzpicture}
		\caption{An overview of the construction used in the proof of \Cref{thm:ARH:Wh:I}. The sets~$S_i$,~$S_j$, and~$T_{i,j}$ consist of unoccupied vertices, and there are sequentially~$n$,~$n$, and~$mn^2$ of them in each set. The vertex~$G_{i,j}$ is occupied by an inhabitant with an approval set~$\{\epsilon_{i,j}(\{u,v\})\cdot n^2 + \nu_i(u) \mid \{u,v\}\in E_{i,j} \land u \in V_i\}$ and the vertex~$G_{j,i}$ is occupied by an inhabitant with an approval set~$\{\epsilon_{i,j}(\{u,v\})\cdot n^2 + \nu_j(v) \mid \{u,v\}\in E_{i,j} \land v \in V_j\}$. The inhabitant occupying the vertex~$M_{i,j}$ approves the set~$\{t\cdot n^2\mid t\in[m]\}$.}
		\label{fig:ARH:IR:Wh:I}
	\end{figure}
    
	For correctness, let~$\mathcal{I} = (G,k)$ be a \yesI and~${v_i,\ldots,v_k}$, where~$v_i\in V_i$, be vertices that form a clique in~$G$. For every~$i\in[k]$ we assign~$\nu_i(v_i)$ refugees to empty vertices of~$S_i$ and for each pair of distinct~$i,j\in[k]$, we house~$\epsilon_{i,j}(\{v_i,v_k\})\cdot n^2$ refugees on empty vertices of~$T_{i,j}$. The remaining refugees are assigned to the auxiliary vertices. The only inhabitants occupy the vertices~$M_{i,j}$,~$G_{i,j}$, and~$G_{j,i}$. The inhabitants of~$M_{i,j}$ are easily satisfied since we assign some multiple of~$n^2$ to every~$T_{i,j}$. An inhabitant in~$G_{i,j}$ is adjacent to~${\epsilon_{i,j}(\{v_i,v_k\})\cdot n^2}$ refugees from~$T_{i,j}$ and~${\nu_i(v_i)}$ refugees from~$S_i$ while~$G_{j,i}$ is adjacent to~${\epsilon_{i,j}(\{v_i,v_k\})\cdot n^2}$ refugees from~$T_{i,j}$ and~${\nu_j(v_j)}$ refugees from~$S_j$, respectively. This complies with their approval set.
	
	In the opposite direction, let the equivalent \ARHshort instance~$\mathcal{I'}$ be a \yesI and~$\pl$ be an inhabitant respecting housing in~$\mathcal{I'}$. Due to the inhabitants of~$M_{i,j}$, where~$i,j\in[k]$, there is some positive multiple of~$n^2$ refugees assigned to every~$T_{i,j}$ that corresponds to some edge~$e\in E_{i,j}$ in the original graph. Moreover, due to inhabitant on~$G_{i,j}$, the number of refugees assigned to~$S_i$ corresponds to the identification of some vertex~$v\in V_i$ that is necessarily incident to~$e$. The same holds for~$G_{j,i}$.
	
	It is not difficult to see that the vertices~$G_{i,j}$ and~$G_{j,i}$ together with~$M_{i,j}$, where~$i,j\in[k]$ and~$i\not= j$, are the only vertices occupied by the inhabitants. Consequently, the number of inhabitants is~$\binom{k}{2} + k\cdot(k-1) = \Oh{k^2}$. Therefore, the reduction is indeed a parameterized reduction, finishing the proof.
\end{proof}

Again, we complement the hardness lower bound given in the previous theorem with a matching algorithmic upper bound.

\begin{theorem}\label{thm:ARH:XP:I}
	The \ARH problem can be solved in~$2^\Oh{|\IH|\cdot\log|\IH|}\cdot n^\Oh{|\IH|}$ time, that is, \ARHshort is in \XP when parameterized by the number of inhabitants~$|\IH|$. Moreover, unless ETH fails, there is no algorithm that solves \ARHshort in~$f(|\IH|)\cdot n^{o(\sqrt{|\IH|})}$ time for any computable function~$f$.
\end{theorem}
\begin{proof}
	For every inhabitant, we guess the number of refugees in his neighborhood in a hypothetical solution~$\pl'$. For each such guess, we run the algorithm from \Cref{thm:ARH:IR:FPT:I:intervals} to verify whether such a housing can be realized. Overall, we create~$n^\Oh{|\IH|}$ ILP instances, and each instance can be decided in \FPT time. As we check all possible solutions, the algorithm is trivially correct.

    For the running time lower-bound, recall that by \Cref{thm:ETH:XP}, there is no algorithm solving \probName{Multicolored Clique} in~$f(k)\cdot n^{o(k)}$ time. For the sake of contradiction, assume that there exists an algorithm~$\mathbb{A}$ solving \ARHshort in~$g(|\IH|)\cdot n^{o(\sqrt{|\IH|})}$ time. Then, given an instance~$\mathcal{I}$ of \probName{Multicolored Clique}, we can, in polynomial time, construct an equivalent instance~$\mathcal{I'}$ of \ARHshort as in \Cref{thm:ARH:Wh:I}, decide it using algorithm~$\mathbb{A}$, and return the same outcome for~$\mathcal{I}$. Overall, we obtain an algorithm running deciding \probName{Multicolored Clique} in~$g(|\IH|)\cdot n^{o(\sqrt{|\IH|})} + n^\Oh{1} = g(k^2)\cdot n^{o(\sqrt{k^2})} = f(k)\cdot n^{o(k)}$ time, which contradicts \Cref{thm:ETH:XP}.
\end{proof}

It is easy to see that the previous \XP algorithm becomes fixed-parameter tractable if we additionally parameterize by the largest number of refugees approved. Formally, we have the following.

\begin{corollary}
	The \ARH problem is fixed-parameter tractable when parameterized by the number of inhabitants~$|\IH|$ and the largest approved number~$\max_{h\in\IH}\max A_h$ combined.
\end{corollary}

By careful guessing, we can prove that a fixed-parameter tractable algorithm exists for the number of refugees and the number of inhabitants, combined.

\begin{theorem}\label{thm:ARH:FPT:R+I}
	The \ARH problem is fixed-parameter tractable when parameterized by the number of refugees~$|\RF|$ and the number of inhabitants~$|\IH|$, combined.
\end{theorem}
\begin{proof}
    We fix an arbitrary ordering~$(r_1,\ldots,r_{|\RF|})$ of the refugees. Now, we try all possible vectors~$\vec{s}=(\IH_{1},\ldots,\IH_{|\RF|})$, where~$\IH_j \subseteq V_{\IH}$ for every~$j\in[|\RF|]$. Intuitively, this vector represents, for each refugee~$r\in\RF$, its neighborhood in a hypothetical solution. For every such~$\vec{s}$, we verify whether this vector describes a valid housing. To do so, we first create a temporal topology~$G' = G$. Then, we traverse~$\vec{s}$ from left to right, and, for every~$j$, we check whether~$|\bigcap_{v \in \IH_j} N_{G'}(v) | \geq 1$. If the intersection is empty, we refuse the current vector~$\vec{s}$. Otherwise, we remove arbitrary vertex~$w$, which is an element of this intersection, from~$G'$ (that is, we set~$G' = G'\setminus\{w\}$), set~$\pl(r_j) = w$, and continue with another refugee. When the algorithm constructs the whole housing~$\pl$, we need to check that~$\pl$ is inhabitant-respecting. This can be easily done by enumerating all inhabitants and checking that, in the topology~$G$, they approve their neighborhoods with respect to the housing~$\pl$. If this is the case for some vector~$\vec{s}$, the algorithm returns \yes.
\end{proof}

The last assumed natural parameter is the number of empty vertices~$|V_U|$ the refugees can be housed on. Note that~$|V_U| \geq |\RF|$. This parameterization yields, in contrast to \Cref{thm:ARH:Wh:R}, a simple algorithm running in \FPT time, which is, despite its simplicity, optimal assuming the Exponential Time Hypothesis. 

\begin{theorem}\label{thm:ARH:FPT:empty}
	The \ARH problem can be solved in~$2^\Oh{|V_U|}\cdot n^\Oh{1}$ time, that is, \ARH is in \FPT when parameterized by the number of empty vertices~$|V_U|$. Moreover, unless ETH fails, there is no algorithm solving \ARH in~${2^{o(|V_U|)}\cdot n^\Oh{1}}$ time even if all inhabitants approve intervals.
\end{theorem}
\begin{proof}
	First, observe that if~$|V_U| \leq |\RF|$, then the instance is trivially \noI, as it is not possible to house all refugees. Hence,  we can enumerate all subsets of~$\pl\subseteq V_U$ of size~$|\RF|$ and, for every such~$\pl$, check whether it is an inhabitant-respecting housing. If at least one~$\pl$ is inhabitant-respecting, we return \yes. Otherwise, the algorithm outputs \no. As the algorithm tries all possible housings, it is trivially correct. Moreover, there are~$2^\Oh{|V_U|}$ different housings~$\pl$, and each of them can be verified in polynomial time. Hence, \ARHshort is in \FPT with respect to the number of empty vertices~$|V_U|$.
    
	For the running time lower-bound, suppose that there is an algorithm~$\mathcal{A}$ that solves \ARHshort in~$2^{o(|V_U|)}\cdot n^\Oh{1}$. Then, for every \probName{$2$-Balanced~$3$-SAT} formula~$\varphi$, we can use the basic reduction from \Cref{thm:ARH:NPh}, which can be clearly done in polynomial time, to create an equivalent instance~$\mathcal{I}$ of the \ARHshort problem, solve~$\mathcal{I}$ in~$2^{o(|V_U|)}\cdot n^\Oh{1}$ time, and then reconstruct a solution for~$\varphi$. Overall, this gives us an algorithm running in~$2^{o(|V_U|)}\cdot n^\Oh{1} \subseteq 2^{o(\xi)} \subseteq 2^{o(\xi+\mu)}$ time for \probName{$2$-Balanced~$3$-SAT} which contradicts \Cref{thm:ETH:SAT}.
\end{proof}

In the remainder of this section, we present complexity results concerning various structural restrictions of the topology. Arguably, the most prominent structural parameter is the tree-width of a graph that, informally speaking, expresses its tree-likeness and is usually small in real-life networks~\cite{ManiuSJ2019}. Unfortunately, we can show an intractability result, which already rules out a fixed-parameter tractable algorithm for graphs with a small vertex cover number---an even more restrictive structural parameter than the treewidth.

\begin{theorem}\label{thm:ARH:Wh:vc}
	The \ARH problem is \Wh when parameterized by the vertex cover number~$\operatorname{vc}(G)$ of the topology. Moreover, unless ETH fails, there is no algorithm that solves \ARH in~$f(|\operatorname{vc}(G)|)\cdot n^{o(\sqrt{\operatorname{vc}(G)})}$ time for any computable function~$f$.
\end{theorem}
\begin{proof}
	Although not stated formally, it can be seen that the construction used to prove \Cref{thm:ARH:Wh:I} has not only many parameter-many inhabitants, but these agents also form a vertex cover of the topology. Recall from \Cref{fig:ARH:IR:Wh:I} that each vertex- and edge-selection gadget is an independent set, and these gadgets are connected only via vertices~$G_{i,j}$,~$i,j\in[k]$. Moreover, every edge-selection gadget is additionally connected to a vertex~$M_{i,j}$,~$i,j\in[k]$. If we remove all~$G_{i,j}$ and~$M_{i,j}$, we obtain a disjoint union of independent sets. Hence, these vertices are a vertex cover of size~$\Oh{k^2}$. The running time lower bound follows by the same arguments as in \Cref{thm:ARH:Wh:I}.
\end{proof}

It is well-known and easy to see that whenever a graph is of a bounded vertex cover number, it is also of a bounded treewidth. Therefore, the intractability for treewidth follows directly from the previous theorem. In the following result, we strengthen the hardness and provide an even stricter running-time lower bound.

\begin{theorem}\label{thm:ARH:Wh:tw}
	The \ARH problem is \Wh parameterized by the tree-width~$\operatorname{tw}(G)$ of the topology~$G$. Unless ETH fails, there is no algorithm solving \ARH in~$f(\tau)\cdot n^{o(\tau/\log \tau)}$ time, where~$\tau=\operatorname{tw}(G)$, for any computable function~$f$.
\end{theorem}
\begin{proof}
    We reduce from the \probName{Unary Bin Packing} problem. Here, we are given a bin capacity~$B$, a set of items~$A=(a_1,\ldots,a_n)$ and a number of bins~$k$. Our goal is to decide whether there is an assignment~$\beta$ of all items to bins with respect to bin capacity. It is known that, unless ETH fails, \probName{Unary Bin Packing} cannot be solved in~$g(k)\cdot n^{o(k/\log k)}$ time for any function~$g$~\cite{JansenKMS2013}.
	
	Our construction of an equivalent instance~$\mathcal{I}'$ of the \ARHshort problem is as follows. First, for every~$i\in[k]$ we create a \emph{bin vertex}~$B_i$ and make it occupied by an inhabitant~$b_i$ with~${A_{b_i} = [B]_0}$. Note that these inhabitants approve the whole interval from~$0$ to~$B$.
	Next, we add \emph{items gadgets} to map the items to bins. The single item gadget for some~$a_j\in A$ is a star with~$a_j$ leaves and a center occupied by an inhabitant~$c_j$ with~$A_{c_j} = \{0,a_j\}$. This ensures that refugees occupy all or none of the leaves in the solution. We add an item gadget~$X_j^i$ for every~$i\in[k]$ and every~$a_j\in A$ and add an edge connecting bin vertex~$B_i$ with all leaves of item gadgets~$X_j^i$, where~$i\in[k]$ and~$j\in[n]$. 
	Finally, we must ensure that every item is assigned to exactly one bag. This is ensured by a \emph{guard vertex}~$G_i$ for every item~$a_i\in A$. This vertex is occupied by an inhabitant~$g_i$ with~$A_{g_i} = \{a_i\}$ and is connected to all leaves of item gadgets~$X_i^j$, where~$j\in[k]$. The number of refugees in our instance is~$|R| = \sum_{a_i\in A} a_i$.
	For an illustration of our construction, we refer the reader to \Cref{fig:SDRH:IR:Wh:5pvcn}.
	
	\begin{figure}
        \centering
		\begin{tikzpicture}
			\node[white,fill=black,circle,inner sep=1.5pt,label=90:{\small$[0,B]$}] (B1) at (0,0) {$B_1$};
			
			\node[white,fill=black,circle,inner sep=1.5pt,label=90:{\small$\{0,a_1\}$}] (a1B1) at (2,1) {$X_1^1$};
			\node[draw,circle,inner sep=1pt] (a1B1l1) at (2,0.25) {};
			\node[draw,circle,inner sep=1pt] (a1B1l2) at (1.75,0.25) {};
			\node[draw,circle,inner sep=1pt] (a1B1l3) at (2.25,0.25) {};
			\draw (a1B1) edge (a1B1l1) edge (a1B1l2) edge (a1B1l3);
			\draw[bend right=7] (B1) edge (a1B1l1) edge (a1B1l2) edge (a1B1l3);
			
			\node[white,fill=black,circle,inner sep=1.5pt,label=90:{\small$\{0,a_2\}$}] (a2B1) at (3,1) {$X_2^1$};
			\node[draw,circle,inner sep=1pt] (a2B1l1) at (3.125,0.25) {};
			\node[draw,circle,inner sep=1pt] (a2B1l2) at (2.875,0.25) {};
			\draw (a2B1) edge (a2B1l1) edge (a2B1l2);
			\draw[bend right=7] (B1) edge (a2B1l1) edge (a2B1l2);
			
			\node at (4,1) {$\cdots$};
			
			\node[white,fill=black,circle,inner sep=1.5pt,label=90:{\small$\{0,a_n\}$}] (anB1) at (5,1) {$X_n^1$};
			\node[draw,circle,inner sep=1pt] (anB1l1) at (4.625,0.25) {};
			\node[draw,circle,inner sep=1pt] (anB1l2) at (4.875,0.25) {};
			\node[draw,circle,inner sep=1pt] (anB1l3) at (5.125,0.25) {};
			\node[draw,circle,inner sep=1pt] (anB1l4) at (5.375,0.25) {};
			\draw (anB1) edge (anB1l1) edge (anB1l2) edge (anB1l3) edge (anB1l4);
			\draw[bend right=7] (B1) edge (anB1l1) edge (anB1l2) edge (anB1l3) edge (anB1l4);

			\node[white,fill=black,circle,inner sep=1.5pt,label=90:{\small$[0,B]$}] (B2) at (0,-1.5) {$B_2$};
			
			\node[white,fill=black,circle,inner sep=1.5pt] (a1B2) at (2,-0.5) {$X_1^2$};
			\node[draw,circle,inner sep=1pt] (a1B2l1) at (2,-1.25) {};
			\node[draw,circle,inner sep=1pt] (a1B2l2) at (1.75,-1.25) {};
			\node[draw,circle,inner sep=1pt] (a1B2l3) at (2.25,-1.25) {};
			\draw (a1B2) edge (a1B2l1) edge (a1B2l2) edge (a1B2l3);
			\draw[bend right=7] (B2) edge (a1B2l1) edge (a1B2l2) edge (a1B2l3);
			
			\node[white,fill=black,circle,inner sep=1.5pt] (a2B2) at (3,-0.5) {$X_2^2$};
			\node[draw,circle,inner sep=1pt] (a2B2l1) at (3.125,-1.25) {};
			\node[draw,circle,inner sep=1pt] (a2B2l2) at (2.875,-1.25) {};
			\draw (a2B2) edge (a2B2l1) edge (a2B2l2);
			\draw[bend right=7] (B2) edge (a2B2l1) edge (a2B2l2);
			
			\node at (4,-0.5) {$\cdots$};
			
			\node[white,fill=black,circle,inner sep=1.5pt] (anB2) at (5,-0.5) {$X_n^2$};
			\node[draw,circle,inner sep=1pt] (anB2l1) at (4.625,-1.25) {};
			\node[draw,circle,inner sep=1pt] (anB2l2) at (4.875,-1.25) {};
			\node[draw,circle,inner sep=1pt] (anB2l3) at (5.125,-1.25) {};
			\node[draw,circle,inner sep=1pt] (anB2l4) at (5.375,-1.25) {};
			\draw (anB2) edge (anB2l1) edge (anB2l2) edge (anB2l3) edge (anB2l4);
			\draw[bend right=7] (B2) edge (anB2l1) edge (anB2l2) edge (anB2l3) edge (anB2l4);
			
			\node at (0,-2.075) {$\vdots$};

			\node[white,fill=black,circle,inner sep=1.5pt,label=90:{\small$[0,B]$}] (Bm) at (0,-3.25) {$B_k$};
			
			\node[white,fill=black,circle,inner sep=1.5pt] (a1Bm) at (2,-2.25) {$X_1^k$};
			\node[draw,circle,inner sep=1pt] (a1Bml1) at (2,-3) {};
			\node[draw,circle,inner sep=1pt] (a1Bml2) at (1.75,-3) {};
			\node[draw,circle,inner sep=1pt] (a1Bml3) at (2.25,-3) {};
			\draw (a1Bm) edge (a1Bml1) edge (a1Bml2) edge (a1Bml3);
			\draw[bend right=7] (Bm) edge (a1Bml1) edge (a1Bml2) edge (a1Bml3);
			
			\node[white,fill=black,circle,inner sep=1.5pt] (a2Bm) at (3,-2.25) {$X_2^k$};
			\node[draw,circle,inner sep=1pt] (a2Bml1) at (3.125,-3) {};
			\node[draw,circle,inner sep=1pt] (a2Bml2) at (2.875,-3) {};
			\draw (a2Bm) edge (a2Bml1) edge (a2Bml2);
			\draw[bend right=7] (Bm) edge (a2Bml1) edge (a2Bml2);
			
			\node at (4,-2.5) {$\cdots$};
			
			\node[white,fill=black,circle,inner sep=1.5pt] (anBm) at (5,-2.25) {$X_n^k$};
			\node[draw,circle,inner sep=1pt] (anBml1) at (4.625,-3) {};
			\node[draw,circle,inner sep=1pt] (anBml2) at (4.875,-3) {};
			\node[draw,circle,inner sep=1pt] (anBml3) at (5.125,-3) {};
			\node[draw,circle,inner sep=1pt] (anBml4) at (5.375,-3) {};
			\draw (anBm) edge (anBml1) edge (anBml2) edge (anBml3) edge (anBml4);
			\draw[bend right=7] (Bm) edge (anBml1) edge (anBml2) edge (anBml3) edge (anBml4);

			\node[white,fill=black,circle,inner sep=1.5pt,label=-90:{\small$\{a_1\}$}] (g1) at (1.5,-4) {$G_1$};
			\draw[bend left=7] (g1) edge (a1Bml1) edge (a1Bml2) edge (a1Bml3);
			\draw[bend left=7] (g1) edge (1.4,-3.5) edge (1.3,-3.5) edge (1.2,-3.5) edge (1.1,-3.5);
			
			\node[white,fill=black,circle,inner sep=1.5pt,label=-90:{\small$\{a_2\}$}] (g2) at (2.5,-4) {$G_2$};
			\draw[bend left=7] (g2) edge (a2Bml1) edge (a2Bml2);
			\draw[bend left=7] (g2) edge (2.5,-3.5) edge (2.4,-3.5) edge (2.3,-3.5) edge (2.2,-3.5);
			
			\node at (3.5,-4) {$\cdots$};
			
			\node[white,fill=black,circle,inner sep=1.5pt,label=-90:{\small$\{a_n\}$}] (gn) at (4.5,-4) {$G_n$};
			\draw[bend left=7] (gn) edge (4.4,-3.5) edge (4.3,-3.5) edge (4.2,-3.5) edge (4.1,-3.5);
			\draw[bend left=7] (gn) edge (anBml1) edge (anBml2) edge (anBml3) edge (anBml4);
		\end{tikzpicture}
		\caption{Overview of the construction used in the proof of \Cref{thm:ARH:Wh:tw}. Every guard vertex~$G_i$ is connected to the leaves of each element gadget~$X_i^j$, where~$j\in[k]$.}
		\Description{Overview of the construction used in the proof of \Cref{thm:ARH:Wh:tw}. Every guard vertex~$G_i$ is connected to the leaves of each element gadget~$X_i^j$, where~$j\in[k]$.}
		\label{fig:SDRH:IR:Wh:5pvcn}
	\end{figure}
	
	For the correctness, let~$\mathcal{I}$ be a \yesI of the \probName{Unary Bin Packing} problem and~$\beta$ be a solution assignment. For every~$a_i\in A$, we house~$a_i$ refugees to the leaves of the item gadget~$X_i^{\beta(a_i)}$. Since~$\mathcal{I}$ is a \yesI, every bin vertex has at most~$B$ refugees in the neighborhood. Moreover, each item gadget is either empty or full, in the neighborhood of every guard inhabitant~$g_i$ there is exactly~$a_i$ neighboring refugees, and, finally, all refugees are housed. Thus,~$\mathcal{I'}$ is also a \yesI.
	
	In the opposite direction, let~$\mathcal{I}'$ be a \yesI and~$\pl$ be a solution housing. Due to the definition of the approval sets, every item gadget is either full or empty. In our construction, there are~$k$ copies of item gadgets for every item~$a_i\in A$ and guard vertices secure that exactly one copy is full. Moreover, the bin vertices accept at most~$B$ refugees in their neighborhood. We recall that~$|R| = \sum_{a_i\in A} a_i$. Hence, we define the solution assignment~$\beta$ for~$\mathcal{I}$ as~$\beta(a_i) = j$, where~$j\in[k]$ and~$X_i^j$ is full.
	
	It is not hard to see that the construction has tree-width~$\Oh{k}$; if we remove the vertices~$B_1,\ldots,B_k$ from~$G$, we obtain~$n$ disconnected components~$C_1,\ldots,C_n$. Each of these components consists of~$k$ stars and an extra vertex connected to the leaves of these stars. Hence, the treewidth of each such~$C_j$ is exactly two. As the components are independent, they can be processed independently in the tree decomposition, and the treewidth of~$G$ is, therefore, at most~$k + 2$. Now, assume that there is an algorithm~$\mathbb{A}$ for \ARHshort running in~$f(\tau)\cdot n^{o(\tau/\log \tau)}$, where~$\tau = \operatorname{tw}(G)$. Then, given an instance~$\mathcal{I}$ of \probName{Unary Bin Packing}, we can turn it into an equivalent instance~$\mathcal{I}'$ of \ARHshort using the reduction above, solve~$\mathcal{I}'$ using~$\mathbb{A}$, and return the same response for~$\mathcal{I}$. Since the reduction can be done in polynomial time, we overall obtain an algorithm for \probName{Unary Bin Packing} running in~$f(\tau)\cdot n^{o(\tau/\log \tau)} = g(k)\cdot n^{o(k/\log k)}$, which contradicts ETH.
\end{proof}

We complement the hardness result with respect to the vertex cover number by a matching \XP algorithm. Moreover, if all inhabitants in the vertex cover approve intervals, the algorithm becomes fixed-parameter tractable.

\begin{theorem}\label{thm:ARH:XP:vc}\label{thm:ARH:FPT:vc:interval}
	The \ARH problem is in \XP when parameterized by the vertex cover number~$\operatorname{vc}(G)$. If, additionally, all inhabitants in the vertex cover~$M$ have interval approvals, the problem becomes fixed-parameter tractable.
\end{theorem}
\begin{proof}
	Let~$M\subseteq V$ be a minimum size vertex cover of~$G$ and let~$k = |M|$. The algorithm first guesses (by guessing, we mean iteratively trying all possibilities) the number of refugees~$k' \leq \min\{|V_U \cap M|,|\RF|\}$ that are, in a hypothetical solution, housed on modulator vertices. Additionally, for every~$k'$, we guess~$k'$-sized set~$S \subseteq M \cap V_U$ of particular empty modulator vertices used by a solution housing. Note that there are~$\Oh{k}$ candidate values~$k'$ and~$2^\Oh{k}$ possible sets~$S$. Now, we check whether this housing respects the preferences of all the inhabitants occupying vertices outside of~$M$. If at least one of these inhabitants, say~$h$, disapproves of his neighborhood, we reject the guess. Otherwise, we continue with the second phase of the algorithm.

    If all inhabitants outside of~$M$ are satisfied, they will be satisfied even if we extend the housing with an arbitrary empty vertex outside of the modulator, as the vertices outside of~$M$ form an independent set. Therefore, we can remove all inhabitants outside of~$M$ from the instance to obtain a reduced topology~$G'$. In~$G'$, there are at most~$\Oh{k}$ inhabitants. If all remaining inhabitants approve intervals, we can use a slightly modified ILP formulation from \Cref{thm:ARH:IR:FPT:I:intervals} to verify whether the partial housing~$\pl = S$ can be extended. The modification includes the removal of the variable~$x_v$ for every~$v\in (V_U\cap M)\setminus S$ and adding a condition~$x_v = 1$ for every~$v\in S$. If ILP is feasible, we return \yes. Otherwise, we continue with another guess. As was shown, this ILP can be solved in \FPT time. Since the guessing phase of the algorithm can also be performed in \FPT time, we obtain that the overall algorithm is also \FPT. 
    
    If the inhabitants' preferences are not intervals, we cannot directly use the ILP formulation from the previous case. Instead, for every inhabitant~$h\in \IH$, we guess the number of refugees allocated to its neighborhood~$n_h$, and set~$A'_h = \{n_h\}$. For each such guess of~$n_h$'s, we obtain an instance where the approval set of every inhabitant consists of a single number (and therefore forms an interval), and we can finally use the modified ILP from the previous case to verify whether a housing~$\pl$ can be realized. If the ILP is feasible, we return \yes. Otherwise, we continue with another guess. The bottleneck of this approach is the guessing of different values of~$n_h$, as there are~$n^\Oh{|\IH|}$ different combinations possible. This shows that \ARHshort is in \XP when parameterized by the vertex cover number~$\operatorname{vc}(G)$.
\end{proof}

By combining arguments from \Cref{thm:ARH:IR:FPT:I:delta} with the algorithm of \Cref{thm:ARH:FPT:vc:interval}, we obtain the following last positive result of this section.

\begin{corollary}
	The \ARH problem is fixed-parameter tractable when parameterized by the vertex cover number~$\operatorname{vc}(G)$ and the maximum number of disjoint intervals~$\delta$, combined.
\end{corollary}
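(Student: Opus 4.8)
The plan is to rerun the proof of \Cref{thm:ARH:FPT:vc:interval} essentially verbatim and replace its single appeal to \Cref{thm:ARH:IR:FPT:I:intervals} by an appeal to \Cref{thm:ARH:IR:FPT:I:delta}; this is the only place in that proof where the ``all inhabitants approve intervals'' assumption was actually used. Concretely, I would set $k=\operatorname{vc}(G)$, compute a minimum vertex cover $M$, and, exactly as before, invoke \Cref{lem:SDRH:IR:bipartite} to assume that every edge of $G$ joins an inhabitant vertex to an empty vertex, and \Cref{lem:SDRH:IR:intolerantRehousing} to assume that every inhabitant has an empty neighbour. Then split the connected components into the class $C_\IH$, whose part of $M$ is occupied by inhabitants, and the class $C_\RF$, whose part of $M$ consists of empty vertices, observing as in \Cref{thm:ARH:FPT:vc:interval} that in a component of $C_\RF$ every inhabitant has its whole neighbourhood among the at most $k$ empty vertex-cover vertices, whereas in a component of $C_\IH$ every empty vertex has its whole neighbourhood among the at most $k$ inhabitant vertex-cover vertices.

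Next I would, again as in \Cref{thm:ARH:FPT:vc:interval}, guess the number $k'\le k$ of refugees housed in $C_\RF$, enumerate the at most $2^{k'}$ ways of choosing which empty vertex-cover vertices of $C_\RF$ receive them, and check that every inhabitant in $C_\RF$ is satisfied --- which is the plain membership test ``number of refugee neighbours lies in $A_i$'' and is oblivious to the shape of $A_i$. The remaining $|\RF|-k'$ refugees must be placed in $C_\IH$; but that sub-problem is itself an instance of the \ARHshort problem with at most $k$ inhabitants whose approval sets are just the original ones (they constrain only how many empty vertex-cover neighbours of an inhabitant get a refugee), hence still unions of at most $\delta$ intervals. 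So I would solve it with \Cref{thm:ARH:IR:FPT:I:delta} in $\delta^{k}\,k^{\Oh{k}}\,n^{\Oh{1}}$ time. Multiplying by the $2^{\Oh{k}}$ outer guesses yields total running time $2^{\Oh{k}}\,\delta^{k}\,k^{\Oh{k}}\,n^{\Oh{1}}$, that is, fixed-parameter tractability in $\operatorname{vc}(G)+\delta$.

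I do not expect a genuine obstacle here: the proof of \Cref{thm:ARH:FPT:vc:interval} meets the interval hypothesis only through \Cref{thm:ARH:IR:FPT:I:intervals}, while its structural part --- the $C_\IH$/$C_\RF$ dichotomy, the enumeration over the at most $k$ empty vertex-cover vertices of $C_\RF$, and the feasibility check there --- never inspects the shape of the approval sets. The one point that deserves an explicit sentence is that the sub-instance handed to \Cref{thm:ARH:IR:FPT:I:delta} really has at most $k$ inhabitants: after the two normalisations above, any inhabitant lying in a $C_\IH$-component must be one of its vertex-cover vertices, since an inhabitant off the vertex cover would have only inhabitant neighbours there and hence be isolated and already removed. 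Everything else is a routine recombination of the three bounds above.
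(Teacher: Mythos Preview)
Your proposal is correct and is exactly the argument the paper intends: the paper's own proof of the corollary is the single sentence ``By the same argumentation used in the proof of \Cref{thm:ARH:FPT:vc:interval}, we obtain the following last result of this section,'' and you have spelled this out by rerunning that proof and swapping the call to \Cref{thm:ARH:IR:FPT:I:intervals} for \Cref{thm:ARH:IR:FPT:I:delta}. Your additional sentence justifying why the $C_\IH$ sub-instance has at most $k$ inhabitants is a welcome clarification the paper leaves implicit.
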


\section{Hedonic Preferences}\label{sec:hedonic_setting}

Our second model of refugee housing improves upon the previous model by introducing the individual preferences of refugees. Naturally, refugees are no longer anonymous, and the identity of every particular refugee matters. The preferences of the inhabitants are again dichotomous, and for every inhabitant~$h\in\IH$, the approval set~$A_h$ is a subset of~$2^\RF$. Similarly, for a refugee~$r\in \RF$, the approval set~$A_r$ is a subset of~$2^\IH$. Our goal is to find housing compatible with both groups' preferences.

\begin{definition}
	A housing~$\pl\colon\RF\to V_U$ is called \emph{respecting} if for every~$h\in\IH$ we have~$V_\pl \cap N_G(\ia(h)) \in A_h$ and for every~$r\in\RF$ we have~$N_G(\pl(r)) \cap V_\IH \in A_r$.
\end{definition}

In other words, a housing~$\pl$ is respecting if every inhabitant and every refugee approves its neighborhood. We study the problem of deciding whether there is a respecting housing in the instance with hedonic preferences under the name \HRH (\HRHshort for short).

\begin{remark}
    Observe that, under hedonic preferences, our agents' approvals can be exponential in the input size. Therefore, we cannot represent them explicitly, as in the case of anonymous preferences. Instead, we assume that the approval set of each agent is given in the form of an oracle that, for a given subset of agents of the opposite type, returns \texttt{true} if this set is approved and \texttt{false} otherwise. With this assumption, we measure the running time of our algorithms in the number of oracle calls.
\end{remark}

We start with a simple example that illustrates the definition of hedonic preferences and respecting housing.

\begin{example}\label{ex:HRH}
    \begin{figure}
        \centering
        \begin{tikzpicture}
            \node[draw,white,fill=black,circle,minimum width=0.7cm,label=180:{\normalsize$\{\{r_1\},\underline{\{r_2\}}\}$}] (v1) at (0,0) {$h_1$};
            \node[draw,white,fill=black,circle,minimum width=0.7cm,label=180:{\normalsize$\{\{r_2\},\{r_1,r_2\}\}$}] (v2) at (0,1.5) {$h_2$};
            \node[draw,circle,minimum width=0.7cm,label=0:{\normalsize$\{\{h_1\}\}$}] (v3) at (1.5,1.5) {$r_1$};
            \node[draw,circle,minimum width=0.7cm,label=0:{\normalsize$\{\{h_2\}\}$}] (v4) at (1.5,0) {$r_2$};

            \node at (0.75,-1) {\textcolor{red!80!black}{\Large\xmark}};
            
            \draw (v1) -- (v2) -- (v3) -- (v4) -- (v1);
        \end{tikzpicture}
        \hspace{1cm}
        \begin{tikzpicture}
            \node[draw,white,fill=black,circle,minimum width=0.7cm,label=180:{\normalsize$\{\underline{\{r_1\}},\{r_2\}\}$}] (v1) at (0,0) {$h_1$};
            \node[draw,white,fill=black,circle,minimum width=0.7cm,label=180:{\normalsize$\{\underline{\{r_2\}},\{r_1,r_2\}\}$}] (v2) at (0,1.5) {$h_2$};
            \node[draw,circle,minimum width=0.7cm,label=0:{\normalsize$\{\underline{\{h_2\}}\}$}] (v3) at (1.5,1.5) {$r_2$};
            \node[draw,circle,minimum width=0.7cm,label=0:{\normalsize$\{\underline{\{h_1\}}\}$}] (v4) at (1.5,0) {$r_1$};
            
            \node at (0.75,-1) {\textcolor{green!80!black}{\Large\cmark}};
            
            \draw (v1) -- (v2) -- (v3) -- (v4) -- (v1);
        \end{tikzpicture}
        \caption{An instance of the \HRHshort problem from \Cref{ex:HRH} and two possible housings. On the left, we have housing that is not respecting, as, e.g., refugee $r_1$ does not approve inhabitant $h_2$ in its neighborhood. On the right, the housing is respecting as each agent approves its neighborhood. Underlined elements of the approval sets represent agents' neighborhoods in the respective housing.}
        \label{fig:HRH:example}
    \end{figure}
	Let the topology be a cycle with four vertices. There are two inhabitants~$h_1$ and~$h_2$ assigned to neighboring vertices and two refugees~$r_1$ and~$r_2$ to house. The approval set of inhabitant~$h_1$ is~$A_{h_1} = \{\{r_1\},\{r_2\}\}$. That is,~$h_1$ approves only one refugee in her neighborhood regardless of the identity. The second inhabitant approves set~$A_{h_2}=\{\{r_2,\},\{r_1,r_2\}\}$. In other words, the inhabitant~$h_2$ is dissatisfied with having only the refugee~$r_1$ in the neighborhood and is fine with neighboring both the refugees or $r_2$ alone. For the refugees, we have~$A_{r_1} = \{\{h_1\}\}$ and~$A_{r_2}=\{\{h_2\}\}$. See \Cref{fig:HRH:example} for two possible housings.
\end{example}

Observe that since both inhabitants and refugees have preferences only over the other set of individuals, we can remove all edges between two empty or two occupied vertices, respectively. Hence, all graphs assumed in this section are again bipartite.

Our first result settles a relation between the \ARHshort problem and the setting with hedonic preferences. Not surprisingly, hedonic preferences are more general and can capture an arbitrary instance with anonymous preferences with the same topology, sets of agents, and allocation of inhabitants. However, this general reduction can lead to preferences of size exponential in the input size.

\begin{theorem}\label{thm:HRH:NPh}\label{thm:HRH:ARH:relation}
	Every instance~$\mathcal{I} = (G,\RF,\IH,\ia,(A_h)_{h\in \IH})$ of the \ARH problem is polynomial-time reducible to an equivalent instance~$\mathcal{J} = (G,\RF,(A'_r)_{r\in\RF},\IH,\ia,(A'_h)_{h\in\IH})$ of the \HRH problem.
\end{theorem}
\begin{proof}
	Let~$\mathcal{I}$ be an instance of \ARHshort. We construct an equivalent instance~$\mathcal{J}$ of \HRHshort with the same topology~$G$, set of refugees~$\RF$, set of inhabitants~$\IH$, and the same allocation~$\ia$ of inhabitants as follows. In~$\mathcal{J}$, the approval set~$A'_r$ for every refugee~$r\in\RF$ is simply~$2^\IH$. That is, the refugees approve arbitrary neighborhoods, and the corresponding oracle simply returns \texttt{true} for an arbitrary subset of~$\IH$. Next, let~$h\in\IH$ be an inhabitant and~$A_h$ its approval set in~$\mathcal{I}$. We set~$A'_h = \{ X\in 2^{\RF} \mid |X| \in A_h \}$. In an oracle, such preferences can be simply represented as a list of~$A_h$'s for all inhabitants. This finishes the construction.

    For correctness, let~$\mathcal{I}$ be a \yesI and~$\pl$ be an inhabitant-respecting housing. We show that~$\pl$ is also a solution for~$\mathcal{J}$. As~$A'_r = 2^{\IH}$ for every~$r\in\RF$,~$\pl$ clearly respects all refugees. Let~$h\in\IH$ be an inhabitant and assume that~$\pl^{-1}(N_G(\ia(i))) \not\in A'_h$. We defined~$A'_h$ as a set~$\{ X\in 2^{\IH} \mid |X| \in A_h \}$. Therefore,~$\pl^{-1}(N_G(\ia(i))) \not\in A'_h$ implies that~$|\pl^{-1}(N_G(\ia(i)))| \not\in A_h$, which contradicts that~$\pl$ is a solution for~$\mathcal{I}$. This is not possible, so~$\pl$ is also a solution for~$\mathcal{J}$. In the opposite direction, let~$\mathcal{J}$ be a \yesI and~$\pl'$ be a solution. We again show that~$\pl'$ is also a solution for~$\mathcal{I}$. In \ARHshort, refugees have no preferences, so we must only verify that~$|N_G(\ia(h))\cap \pl| \in A_h$ for every~$h\in \IH$. However, this is trivially satisfied from the definition of~$A'_h$. Thus,~$\pl'$ is a solution for~$\mathcal{I}$, and the reduction is correct.
\end{proof}

By the previous theorem, any hardness result proved in the previous section directly carries over to the setting of \HRHshort. This raises two questions. First, can we strengthen the intractability for certain parameters? Second, which of our tractability results can be generalized to the more general model of preferences? We start with the former question and show a very strong hardness result for \HRHshort.

\begin{theorem}\label{thm:HRH:NPh:vc}\label{thm:HRH:NPh:I}
    The \HRH problem is \NPc even if~$\operatorname{vc}(G) = |\IH| = 1$.
\end{theorem}
\begin{proof}
    We again reduce from \probName{$2$-Balanced~$3$-SAT}. Formally, given an instance~$\mathcal{I}$ of \probName{$2$-Balanced~$3$-SAT}, we create an equivalent instance~$\mathcal{J}$ of \HRHshort as follows. The topology~$G$ is a disjoint union of a star with center~$c$ and~$\xi$ leaves~$\ell_1,\ldots,\ell_\xi$ and~$\xi$ isolated vertices~$v_1,\ldots,v_\xi$. There is a single inhabitant~$h$ assigned to the center~$c$ of the star. The set~$\RF$ contains one refugee~$r_i$ with~$A_{r_i} = \{\emptyset,\{h\}\}$ for every variable~$x_i$,~$i\in[\xi]$. Finally, the inhabitant~$h$ approves a set~$R' \subseteq \RF$ if and only if~$\varphi(\alpha) = 1$, where~$\alpha(x_i) = 1$ if and only if~$r_i\in R'$. Note that we do not need to precompute all the preferences beforehand; instead, the oracle can decide whether a set of neighbors is approved or not on the fly based on the set of refugees provided and the formula $\varphi$.

    For correctness, let~$\mathcal{I}$ be a \yesI and~$\alpha$ be a solution. We construct a housing~$\pl$ such that for every~$r_i$ such that~$\alpha(x_i) = 1$, we set~$\pl(r_i) = \ell_i$ and~$\pl(r_o) = v_i$ otherwise. The refugees assigned to isolated vertices are clearly satisfied as~$\emptyset \in A_{r_i}$ for every~$r\in\RF$. The remaining refugees are neighbors only with the inhabitant~$h$, and~$\{h\}$ is in their approval sets. It remains to verify that~$\pl$ is respecting~$h$'s preferences. Let~$R'$ be a set of refugees housed on the leaves of the star, and, for the sake of contradiction, assume that~$R'\not\in A_h$. It means that~$\phi(\alpha) = 0$, which contradicts that~$\alpha$ is a solution for~$\mathcal{I}$. That is, it must be the case that~$R'\in A_h$. Hence,~$\pl$ is a solution for~$\mathcal{J}$. In the opposite direction, let~$\pl$ be a respecting housing in~$\mathcal{J}$. For every variable~$x_i$, we set~$\alpha(x_i) = 1$ if~$\pl(r_i) \in \{\ell_1,\ldots,\ell_\xi\}$ and~$\alpha(x_i) = 0$ otherwise. If it holds that~$\phi(\alpha) = 0$, we have a direct contradiction with~$\pl$ being a respecting housing for~$h$. That is, it must holds that~$\phi(\alpha) = 1$, meaning that~$\alpha$ is a solution for~$\mathcal{I}$.
\end{proof}

Maybe surprisingly, the isolated vertices in the previous result are necessary for the instance to be intractable, as we show that if the topology~$G$ is a simple star, then \HRHshort is tractable.

\begin{proposition}
    If the topology~$G$ is a star, the \HRH problem can be decided in polynomial time.
\end{proposition}
\begin{proof}
    Let~$c$ be the center of~$G$. Recall that we can assume that $G$ is a bipartite graph with one part consisting of empty vertices and the other part consisting of occupied vertices. We distinguish two cases:~$c$ is empty, or an inhabitant~$h$ exists such that~$\ia(h) = c$. In the latter case, if~$\RF\not\in A_h$ or there exists a refugee~$r\in\RF$ such that~$\{h\}\not\in A_r$, then we are dealing with a clear \noI. Otherwise, we return \yes. In the former case, the center is the only empty vertex. Therefore, we must have~$|\RF| = 1$. The only possible housing is~$\pl(r) = c$, and it can be checked whether it is respecting in polynomial time.
\end{proof}

In our next result, we show that an \XP algorithm exists if the number of refugees to house is constant. Moreover, if we instead parameterize by the number of empty houses~$|V_U|$, the same algorithm becomes fixed-parameter tractable.

\begin{theorem}\label{thm:HRH:XP:R}
	The \HRH problem is in \XP when parameterized by the number of refugees~$|\RF|$ and fixed-parameter tractable when parameterized by~$|V_U|$. Unless ETH fails, there is no algorithm for \HRH running in $f(|\RF|)\cdot n^\oh{|\RF|}$ for any computable function $f$.
\end{theorem}
\begin{proof}
	The algorithm is a simple brute-force. First, we guess for each refugee~$r\in \RF$ its vertex~$v_r$ on which the refugee is housed in a hypothetical solution. Then, we verify whether all guessed vertices are distinct. If not, we reject the guess. Otherwise, we check if all refugees and all inhabitants approve their neighborhoods. If this is the case, we have a solution, and we return \yes. Otherwise, we continue with another guess. If no possibility leads to the \yes response, we return \no. The algorithm tries all possible housings and, therefore, is clearly correct. There are~$|V_U|^{|\RF|}$ possible housings, and the validity of each housing can be checked in polynomial time. Consequently, the running time of the algorithm is~$n^\Oh{|\RF|}$, which is clearly in \XP. If we parameterize by the number of empty vertices~$|V_U|$, then it holds that~$|\RF|\leq|V_U|$. This implies the running time of~$|V_U|^\Oh{|V_U|}\cdot n^\Oh{1}$, which is in \FPT. The lower bound for running time follows from \Cref{thm:HRH:ARH:relation}, and the same lower bound showed for \ARHshort (cf. \Cref{thm:ARH:XP:R}).
\end{proof}

We conclude with an efficient algorithm for the number of refugees and the number of inhabitants, combined.

\begin{theorem}\label{thm:HRH:FPT:R+I}
	The \HRH problem is in \FPT when parameterized by the number of refugees~$|\RF|$ and the number of inhabitants~$|\IH|$, combined.
\end{theorem}
\begin{proof}
	First, we observe that we can partition the empty vertices into at most~$2^{|\IH|}$ types based on their neighborhood. For both refugees and inhabitants, two empty vertices of the same type are clearly indistinguishable since, for a solution, they care only about their neighborhoods and not particular locations. Based on this observation, we can guess for each refugee~$r\in\RF$ a type of vertex on which she is housed in a hypothetical solution~$\pl$. Once we know the type of vertex for each refugee, we can verify that such a housing~$\pl$ is respecting. If this is the case, we return \yes. Otherwise, we continue with another guess. If no housing~$\pl$ is respecting, the algorithm returns \no. There are~$(2^\Oh{|\IH|})^{|\RF|}$ possible guesses, and each one can be verified in polynomial time. Hence, the theorem follows.
\end{proof}

\section{Diversity Preferences}\label{sec:diversity_setting}

In the anonymous refugee housing, we are not assuming the preferences of individual refugees. Thanks to this property, the model is as simple as possible. The fully hedonic setting from \Cref{sec:hedonic_setting} precisely captures the preferences of both the refugees and the inhabitants. On the other hand, the fully hedonic model is not very realistic, as it is hard to acquaint all inhabitants with all refugees.

Hence, we introduce the third model of refugees housing, where both the inhabitants and the refugees are partitioned into types and agents from both groups have preferences over fractions of agents of each type in their neighborhood. 

Such diversity goals, where agents are partitioned into types and the preferences of agents are based on the fraction of each type in their neighborhood or coalition, were successfully used in many scenarios such as school choice~\cite{AzizZ2021,AzizGSW2019,AzizGS2020}, public housing~\cite{BenabbouCHSZ2018,GrossHumberBBM2021}, stable rommate~\cite{BoehmerE2020a}, hedonic games~\cite{BredereckEI2019,BoehmerE2020b,Darmann2023,GanianHKSS2023}, multi-attribute matching~\cite{AhmadiADFK2021}, or employee hiring~\cite{SchumannCFD2019}.

Before we formally define the computational problem of our interest, let us introduce further notation. Let~$\agents = \IH\cup\RF$ be a set of agents partitioned into~$\tau$ types~$T_1,\ldots,T_\tau$. For a set~$S\subseteq \agents$, we define a \emph{palette} as a~$\tau$-tuple~$\left(\frac{|T_i\cap S|}{|S|}\right)_{i\in[\tau]}$ if~$|S|\geq 1$ and~$\tau$-tuple~$(0,\ldots,0)$ if~$S=\emptyset$. Given an agent~$a\in \agents$, her approval set is a subset of the set~$\left\{\left(\frac{|T_i\cap S|}{|S|}\right)_{i\in[\tau]}\mid S\subseteq 2^\agents\right\}$.

\begin{definition}
	A housing~$\pl\colon\RF\to V_U$ is called \emph{diversity respecting} if for every inhabitant~$i\in\IH$ the palette for the set~$(V_{\IH} \cap N_G(\ia(i)))\cup( V_{\pl} \cap N_G(\ia(i)) )$ is in~$A_i$, and for every refugee~$r\in\RF$ the palette for the set~$(V_{\IH}\cap N_G(\pl(r)))\cup(V_{\pl} \cap N_G(\pl(r)) )$ is in~$A_r$.
\end{definition}

The \DRH problem (\DRHshort for short) then asks whether there is a diversity respecting housing~$\pl$. Note that this time, we are not allowed to drop edges between two inhabitants or two empty houses, and thus, the graphs assumed in this section are no longer bipartite.

\begin{example}\label{ex:DRH}
    \begin{figure}
        \centering
        \begin{tikzpicture}
            \node[draw,white,fill=black,circle,minimum width=0.7cm,label=180:{\normalsize$\{(1,0),(1/2,1/2)\}$}] (v1) at (0,0) {$h_1$};
            \node[draw,white,fill=black,circle,minimum width=0.7cm,label=180:{\normalsize$\{(1,0)\}$}] (v2) at (0,1.5) {$h_2$};
            \node[draw,circle,minimum width=0.7cm,label=0:{\normalsize$\{(1,0)\}$}] (v3) at (1.5,1.5) {$r$};
            \node[draw,circle,minimum width=0.7cm] (v4) at (1.5,0) {};

            \node at (0.75,-1) {\textcolor{red!80!black}{\Large\xmark}};
            
            \draw (v1) -- (v2) -- (v3) -- (v4) -- (v1);
        \end{tikzpicture}
        \hspace{0.5cm}
        \begin{tikzpicture}
            \node[draw,white,fill=black,circle,minimum width=0.7cm,label=180:{\normalsize$\{(1,0),(1/2,1/2)\}$}] (v1) at (0,0) {$h_1$};
            \node[draw,white,fill=black,circle,minimum width=0.7cm,label=180:{\normalsize$\{(1,0)\}$}] (v2) at (0,1.5) {$h_2$};
            \node[draw,circle,minimum width=0.7cm] (v3) at (1.5,1.5) {};
            \node[draw,circle,minimum width=0.7cm,label=0:{\normalsize$\{(1,0)\}$}] (v4) at (1.5,0) {$r$};
            
            \node at (0.75,-1) {\textcolor{green!80!black}{\Large\cmark}};
            
            \draw (v1) -- (v2) -- (v3) -- (v4) -- (v1);
        \end{tikzpicture}
        \caption{An instance of the \DRHshort problem from \Cref{ex:DRH} and two possible housings. On the left, we have housing that is not respecting. While inhabitant $h_2$ is satisfied as in its neighborhood, there are only agents of type $T_1$. This is not true for the remaining agents. They have only agent $h_2$ in the neighborhood, and the palette corresponding to $\{h_2\}$ is $(0,1)$, which is in the approval set of neither $h_1$ nor $r$. The housing on the right is, on the other hand, respecting.}
        \label{fig:DRH:example}
    \end{figure}
	Let the topology be a cycle with four vertices. There are two agents of type~$T_1$. One of these agents is an inhabitant~$h_1$ approving~$\{(1,0),(1/2,1/2)\}$ and the second one is a refugee~$r$ approving only agents of his own type; that is,~$A_r = \{(1,0)\}$. The type~$T_2$ contains one inhabitant~$h_2$ approving the set~$\{(1,0)\}$. Inhabitants are assigned such that they are neighbors. There are two possible housings for the refugee~$r$. She can be either a neighbor of~$h_1$ or~$h_2$. Since she accepts only agents of her own type in the neighborhood, the only diversity-respecting housing is next to inhabitant~$h_1$. See \Cref{fig:DRH:example} for a more detailed illustration of possible housings.
\end{example}

We start the investigation of the computational complexity of the \DRHshort problem by establishing a general connection between the hedonic and diversity preferences. More specifically, we show that \DRHshort is at least as hard as the \HRHshort problem. Therefore, every intractability result from the previous section is directly carried over.

\begin{theorem}\label{thm:DRH:HRH:reduction}
    Every instance~$\mathcal{I} = (G,\RF,(A_r)_{r\in\RF},\IH,\ia,(A_h)_{h\in \IH})$ of the \HRH problem is polynomial-time reducible to an equivalent instance~$\mathcal{J} = (G,\RF,(A'_r)_{r\in\RF},\IH,\ia,(A'_h)_{h\in\IH},(T_i)_{i\in\tau})$ of the \DRH problem with~$\tau = |\RF| + |\IH|$.
\end{theorem}
\begin{proof}
    Let~$\mathcal{I}$ be an instance of the \HRHshort problem. Without loss of generality, we can assume that the topology~$G$ is bipartite, with one part containing only occupied vertices and the other part containing only empty vertices. We construct an equivalent instance~$\mathcal{J}$ of \DRHshort as follows. We fix a bijection~$\beta\colon \RF\cup\IH \to [|\RF| + |\IH|]$ and set for each agent~$a\in \RF\cup\IH$ its type as~$T_{\beta(a)}$. Observe that since~$\beta$ is a bijection, each type consists of exactly one agent. Now, we define the preferences. Let~$r\in\RF$ be a refugee and~$A_r$ be its approval set. For every~$X\in A_r$, we add to~$A'_r$ the tuple~$(f_1,\ldots,f_\tau)$, where~$f_i=1/|X|$ if~$\beta^{-1}(i) \in X$ and~$0$ otherwise. For every inhabitant~$h\in\IH$, we define the approval set~$A'_h$ analogously.

    For correctness, assume that~$\mathcal{I}$ is a \yesI and~$\pl$ is a respecting housing. We claim that~$\pl$ is also a solution for~$\mathcal{J}$. For the sake of contradiction, assume that there is a refugee~$r$ such that the palette for~$\{h\in\IH\mid \ia(h) \in N_G(\pl(r))\}\cup\{r'\in\RF\mid \pl(r')\in N_G(\pl(r))\}$ is not in~$A'_r$. First, we can observe that the set~$\{r'\in\RF\mid \pl(r')\in N_G(\pl(r))\}$ is always empty, as the topology~$G$ contains no edge between two empty vertices. As each agent is of a different type, the palette for~$r$'s neighborhood is of the form~$(f_1,\ldots,f_t)$, where~$f_i = 1/|X|$ whenever~$\beta^{-1}(i) \in \{h\in\IH\mid \ia(h) \in N_G(\pl(r))\} = X$ and~$0$ otherwise. However, by the construction of~$A'_r$, such a palette is not in~$A'_r$ if and only if~$X\notin A_r$. This contradicts that~$\pl$ is a solution for~$\mathcal{I}$. For inhabitants, the proof is again analogous. Thus,~$\pl$ is indeed a solution for~$\mathcal{J}$. In the opposite direction, let~$\mathcal{J}$ be a \yesI and~$\pl'$ be a solution. Again, we claim that~$\pl$ is also a solution for~$\mathcal{I}$. For the sake of contradiction, let there be a refugee~$r\in\RF$ (analogously for an inhabitant~$h\in\IH$) for whom~$\pl$ is not respecting in~$\mathcal{I}$. Let~$X\subseteq\IH$ be a subset of inhabitants in the neighborhood of~$\pl(r)$. Since~$X\not\in A_r$, also~$(f_1,\ldots,f_\tau)$, where~$f_i = 1/|X|$ if~$\beta^{-1}(i) \in X$ and~$0$ otherwise, is not in~$A'_r$. However, this contradicts the claim that~$\pl$ is a solution for~$\mathcal{J}$. Hence,~$\mathcal{I}$ is indeed a \yesI, which completes the proof.
\end{proof}

In the general reduction from \Cref{thm:DRH:HRH:reduction}, we heavily exploit the number of types to ensure that the constructed instance of \DRHshort is indeed equivalent to the input instance of \HRHshort. One can argue that, for real-life instances, it is reasonable to expect that the number of different types is very small compared to the number of inhabitants and refugees. However, as we show in the following result, even if there are only two types of agents, the \DRHshort problem remains computationally intractable.

\begin{theorem}\label{thm:DRH:NPc:types}
	The \DRH problem is \NPc, even if there are only two types of agents, the topology is a bipartite graph, and every agent approves exactly one palette.
\end{theorem}
\begin{proof}
	We show the \NPhness by a reduction from the \probName{Set Cover} problem, which is known to be \NPc~\cite{GareyJ1979}. In this problem, we are given a universe~$U = \{u_1,\ldots,u_n\}$, a family~$\mathcal{F}$ of subsets of~$U$, and an integer~$k\in\N$. The goal is to decide whether there is a sub-family~$\mathcal{C}\subseteq\mathcal{F}$ of size at most~$k$ such that~$\bigcup_{C\in\mathcal{C}} C = U$.
	
	Given an instance~$\mathcal{I}=(U,\mathcal{F},k)$, we construct an equivalent instance~$\mathcal{I'}$ of \DRHshort as follows. For every element~$u_i\in U$, we add one vertex~$v_i$ and assign to it an inhabitant~$h_i$. The inhabitant~$h_i$ is of type~$T_1$ and his approval set is~$\{(0,1)\}$. Next, for every subset~$F\in\mathcal{F}$, we create one vertex~$v_F$ that is adjacent to every~$v_i$ such that~$u_i\in F$. To finalize the construction, we add~$k$ refugees~$r_1,\ldots,r_k$ of type~$T_2$ approving the set~$\{(1,0)\}$.
	
	For the correctness, let~$\mathcal{I}$ be a \yesI and~$\mathcal{C} = \{C_1,\ldots,C_k\}$ be a solution for~$\mathcal{I}$. For every~$i\in[k]$, we house the refugee~$r_i$ on the vertex corresponding to the set~$C_i$. In this housing, every refugee is satisfied and, since~$\mathcal{C}$ is a set cover, every~$h_i\in \IH$ neighbors with at least one refugee. In the opposite direction, let~$\pl$ be a diversity-respecting housing. We add to~$\mathcal{C}$ a set~$C_i\in\mathcal{C}$ if and only if there is a refugee housed on the corresponding vertex~$v_{C_i}$. Suppose that there is an element~$u\in U$ which is not covered by~$\mathcal{C}$. Then, there is an inhabitant who is not a neighbor of any refugee. However, this could not be the case as~$\pl$ is diversity respecting. Hence, the reduction is correct and can clearly be done in polynomial time.
\end{proof}

Next, we show a tractable algorithm for the setting with diversity preferences and small number of refugees. Interestingly, even though the algorithm is a simple brute-force over all possible housings, it is, under the standard theoretical assumptions, asymptotically optimal.

\begin{theorem}\label{thm:DRH:XP:R}
	The \DRH problem is in \XP when parameterized by the number of refugees~$|\RF|$ and is fixed-parameter tractable when parameterized by the number of empty vertices~$|V_U|$. Unless ETH fails, there is no algorithm running in~$f(|\RF|)\cdot n^\oh{|\RF|}$ time for any computable function~$f$.
\end{theorem}
\begin{proof}
    A housing is a mapping between the set of refugees~$\RF$ and the set of empty vertices~$V_U$. Therefore, we can brute-force over all~$|V_U|^\Oh{|\RF|}$ possible housings~$\pl$, and for each of them, verify in polynomial time that~$\pl$ is indeed respecting. If we encounter at least one respecting housing, we return \yes. Otherwise, if no housing is respecting, we return \no. The algorithm is trivially correct as it tries all possible solutions and its running time is~$|V_U|^\Oh{R} \cdot n^\Oh{1}$, which is in \XP with respect to~$|\RF|$, and in \FPT with respect to~$|V_U|$, as~$|\RF| \leq |V_U|$. The running time lower-bound follows from \Cref{thm:DRH:HRH:reduction} and analogous result for \HRHshort.
\end{proof}

Naturally, the previous result yields an \XP algorithm for the parameterization by the combined parameter the number of inhabitants and the number of refugees. We conclude with one more intractability result that shows that, unlike in the two previous models, in the case of \DRHshort, an \FPT algorithm for these two parameters is unlikely.

\begin{theorem}\label{thm:DRH:Wh:IR}
    The \DRH problem is \Wh when parameterized by the number of inhabitants~$|\IH|$ and the number of refugees~$|\RF|$, combined, even if each agent approves only one palette.
\end{theorem}
\begin{proof}
    We show the hardness by a reduction from the \probName{Multicolored Clique}. Recall that in this problem, we are given a~$k$-partite graph~$H=(V_1,\ldots,V_k,E)$, the goal is to find a clique~$K$ of size~$k$, and the problem is \Wc with respect to~$k$~\cite{FellowsHRV2009}. Given an instance~$\mathcal{I} = (H,k)$ of \probName{Multicolored Clique}, we construct an equivalent instance~$\mathcal{J}$ of \DRHshort as follows.

    First, we define the topology~$G$. We start with the graph~$H$ and, for each part~$V_i$, we add one pendant vertex~$g_i$ and make it occupied by an inhabitant~$h_i$. The set~$\RF$ contains one refugee~$r_i$ for each color~$i\in[k]$. There are~$\tau=k$ types~$T_1,\ldots,T_\tau$ so that~$T_i = \{h_i,r_i\}$. Finally, we need to define the preferences of our agents. Every inhabitant~$h_i$ approves only the palette~$(f_1,\ldots,f_\tau)$, where~$f_j = 1$ if~$j=i$ and~$0$ otherwise. In other words, the inhabitants require only the refugee of their own color to be in their neighborhood. For a refugee~$r_i$, we set~$A_r = (1/k,\ldots,1/k)$; that is, each refugee requires the same (non-zero) number of agents of each type in its neighborhood. This finishes the description of the construction.

    For correctness, let~$\mathcal{I}$ be a \yesI and~$K$ be a clique of size~$k$ in~$H$. We construct a housing~$\pl$ so that for every~$r_i$, we set~$\pl(r_i) = K\cap V_i$. Clearly,~$\pl$ is respecting for every inhabitant, as they are neighbors with exactly the refugee of their own type. If there is a refugee~$r$ for which~$\pl$ is not respecting, then there is no agent of some type~$t\in[\tau]$ in~$r$'s neighborhood. By the construction of~$\pl$,~$r$ is clearly not missing a neighbor of its own type, so it must be an agent of a different type. This, however, contradicts that~$K$ is a clique in~$H$. Therefore,~$\pl$ is respecting and~$\mathcal{J}$ is a \yesI.

    In the opposite direction, let~$\mathcal{J}$ be a \yesI and~$\pl$ be a solution housing. We set~$K = \{ v \in V(H) \mid \exists r\in\RF\colon \pl(r) = v\}$. First, observe that~$|K| = k$, as there is no empty vertex in~$V(G)\setminus V(H)$. Now, assume that~$K$ does not contain a vertex of some part~$V_i$. Then, by the Pigeonhole principle, there is at least one part~$V_j$ such that~$|V_j \cap K| \geq 2$. Since the inhabitant~$h_j$ requires only agents of its own type in its neighborhood, all refugees housed on~$V_j$ must be of the same type. This is, however, not possible, as there is only one refugee of each type. That is,~$\pl$ is not respecting for~$h_j$, which is a contradiction. Consequently,~$K$ contains exactly one vertex from each part. Finally, assume that~$K$ is not a clique; that is, there is a pair of vertices~$u,v\in K$ such that~$\{u,v\}\not\in E(H)$. Let~$r_i$ be the refugee that caused~$u\in K$ and~$r_j$ be the refugee that caused~$v\in K$. Since~$\{u,v\}\not\in E(G)$, the refugee~$r_i$ does not have a neighbor of type~$T_j$, which again contradicts that~$T_j$ is respecting. Therefore,~$\{u,v\}$ must be an edge of~$G$.

    To conclude, observe that the reduction can be clearly done in polynomial time and that~$|\IH| + |\RF| = 2k$. That is, the reduction is indeed a parameterized reduction, finishing the proof.
\end{proof}

\section{Conclusions} 

We initiated the study of a novel model of refugee housing. The model mainly targets the situations where refugees need to be accommodated and integrated in the local community. This distinguishes us from the previous settings of refugee resettlement, where the perspective is more global.

Our results identify some tractable and intractable cases of finding stable outcomes from the viewpoint of both the classical computational complexity and the finer-grained framework of parameterized complexity. To this end, we believe that other notions of stability inspired, for example, by the model of Schelling games of \citet{AgarwalEGISV2021} or by exchange-stability of~\citet{Alcalde1994}, should be investigated. The first steps in this direction were done by \citet{LisowskiS2023}, who studied swap-stability in the context of refugee housing.

A natural way to tackle the intractability of problems in computational social choice is to restrict the preferences of agents~\cite{ElkindLP2022}. One such restriction that should be investigated, especially in the case of an anonymous setting, is the voter-interval and candidate interval, which were successfully used in similar scenarios; see, e.g., \cite{BiloBLM2022a,Yang2017,BredereckEI2019} or the survey of \citet{ElkindLP2022}. Besides that, we are interested in the anonymous setting in which every inhabitant~$i\in\IH$ approves an interval~$[0,u_i]$, where~$u_i \geq 0$ is an inhabitant-specific upper-bound on the number of refugees in her neighborhood. This direction was already investigated in the follow-up paper by \citet{Schierreich2023}.

Next, there are many notions measuring the quality of an outcome studied in the literature in both the context of Schelling and hedonic games~\cite{ElkindFF2020,AgarwalEGISV2021,AzizCGS2018}, and we believe that these notions should be investigated even in the context of refugee housing. In this line, the most prominent notion is the utilitarian social welfare of an outcome.

Finally, our approach is mostly theoretical. For a practical applicability of the model, many improvements of the model need to be done. The necessary steps are extensively discussed in the work of Schierreich~\cite{Schierreich2024}.

\bibliographystyle{ACM-Reference-Format}
\bibliography{references}

\end{document}